\newtheorem{example}{Example}[section]
\newtheorem{remark}{Remark}[section]
\newtheorem{theorem}{Theorem}[section]
\newtheorem{lemma}{Lemma}[section]
\newtheorem{definition}{Definition}[section]
\def\ket#1{| #1 \rangle}
\def\bra#1{\langle #1 |}
\def\ip#1#2{\langle #1 | #2 \rangle}
\def\norm#1{\| #1 \|}
\def\diag{\operatorname{diag}}
\def\dim{\operatorname{dim}}
\def\rank{\operatorname{rank}}
\def\Span{\operatorname{span}}
\def\Tr{\operatorname{Tr}}
\def\Ad{\operatorname{Ad}}
\def\B{\mathcal{B}}
\def\C{\mathcal{C}}
\def\F{\mathcal{F}}
\def\H{\mathcal{H}}
\def\M{\mathcal{M}}
\def\O{\mathcal{O}}
\def\T{\mathcal{T}}
\def\RR{\mathbb{R}}
\def\CC{\mathbb{C}}
\def\SU{\mathbb{SU}}
\def\ONE{\mathbb{I}}
\def\su{\mathfrak{su}}
\def\xs{\vec{x}\cdot\vec{\sigma}}
\begin{document}
\title{Analysis of Lyapunov Method for Control of Quantum States}

\author{\IEEEauthorblockN{Xiaoting Wang\IEEEauthorrefmark{1} and
S G Schirmer\IEEEauthorrefmark{1}\\
\IEEEauthorblockA{\IEEEauthorrefmark{1}
                  Dept of Applied Mathematics \& Theoretical Physics,
                  University of Cambridge,\\
                  Wilberforce Road, Cambridge, CB3 0WA, UK\\
           Email: x.wang@damtp.cam.ac.uk, sgs29@cam.ac.uk}}}
\date{\today}
\maketitle

\begin{abstract}
The natural trajectory tracking problem is studied for generic quantum
states represented by density operators.  A control design based on the
Hilbert-Schmidt distance as a Lyapunov function is considered.  The
control dynamics is redefined on an extended space where the LaSalle
invariance principle can be correctly applied even for non-stationary
target states.  LaSalle's invariance principle is used to derive a
general characterization of the invariant set, which is shown to always
contain the critical points of the Lyapunov function.  Critical point
analysis of the latter is used to show that, for generic states, it is a
Morse function with $n!$ isolated critical points, including one global
minimum, one global maximum and $n!-2$ saddles.  It is also shown,
however, that the actual dynamics of the system is not a gradient flow,
and therefore a full eigenvalue analysis of the linearized dynamics
about the critical points of the dynamical system is necessary to
ascertain stability of the critical points.  This analysis shows that a
generic target state is locally asymptotically stable if the linearized
system is controllable and the invariant set is regular, and in fact 
convergence to the target state (trajectory) in this case is almost 
global in that the stable manifolds of all other critical points form a 
subset of measure zero of the state space.  On the other hand, if either 
of these sufficient conditions is not satisfied, the target state ceases 
to be asymptotically stable, a center manifold emerges around the target
state, and the control design ceases to be effective.
\end{abstract}

\begin{IEEEkeywords}
trajectory tracking, LaSalle invariance principle, eigenvalue
analysis, quantum systems.
\end{IEEEkeywords}

\section{Introduction}
\label{sec:intro}

Recent advances in quantum optics, trapping of cold atoms, ions and
molecules, and breakthroughs in nano-scale engineering of artificial
atoms and molecules have prompted significant interest in ways to
control these systems and the development of the theoretical foundations
of quantum control theory.  One of the major concerns is how to design
the dynamics of a given system to steer its state to a desired target
and stabilize it in this state.  One proposed technique is Lyapunov
control, where the control function is designed such that a suitably
chosen Lyapunov function is monotonically decreasing along every
trajectory.  A number of Lyapunov-function-based control designs have
been proposed and numerous results established (see, e.g.,
\cite{Vettori,Ferrante,Grivopoulos,Mirrahimi2004a,Mirrahimi2004b,
Mirrahimi2005,Mirrahimi-implicit,altafini1,altafini2,Mirrahimi-stabilize,
D'Alessandro} and references therein).  The convergence analysis in most
of these works is based on the application of LaSalle's invariance
principle.  However, the invariant set for quantum systems is usually
large and thus convergence to the target state cannot be inferred from
the invariance principle directly.

For pure-state quantum systems with states described by wave-functions,
the setting considered in most of the papers to date, there are many
results.  For example, it was shown in~\cite{Mirrahimi2005} that a
particular variant of the method is effective under certain sufficient
conditions, equivalent to controllability of the linearized system.
\cite{Mirrahimi-implicit} also proposed a modified control design based
on an ``implicit'' Lyapunov function for systems that do not satisfy the
conditions for local asymptotic stability.  Although pure states play a
crucial role in quantum mechanics, they form a set of measure zero at
the boundary of the domain of density operators, and in practice quantum
systems are often not in pure states to begin with, due to imperfect
preparation, or because the system is an ensemble of many quantum
particles such as atoms or molecules.  For this reason it is important
to consider control in the context of density operators representing
generic quantum states.  It is also important to consider non-stationary
target states, i.e., target states that are not eigenstates of the
system Hamiltonian, because it is these so-called superposition states
that exhibit non-classical behavior such as interference and
entanglement, which is a crucial resource for novel quantum applications
such as quantum information processing.

This more general case was considered more recently in~\cite{altafini2},
where the convergence properties of the control design were investigated
for both pure and mixed target states, including non-stationary states.
Simulations for bilinear control systems suggest that under certain
strict sufficient conditions as given in~\cite{altafini2}, \emph{all}
trajectories starting outside the invariant set converge to the target
state as opposed to other points in the invariant set.  Simulations also
suggest that when these conditions are not satisfied then we converge to
states (trajectories) in the vicinity of the target, yet never reach the
target, and the states (trajectories) we appear to converge to in the
latter case are \emph{not} critical points of the Lyapunov function.  It
has been suggested that this could be due to numerical errors, but this
fails to explain why such errors should not affect the convergence
equally in all cases.  Can we find analytical results that explain these
observations?  Another issue is the correct application the invariance
principle for non-stationary target states.  Simply transforming to a
rotating frame to make the target state formally appear stationary, as
has been done in several papers, is problematic because the resulting
system is non-autonomous, making the application of the invariance
principle problematic.

To address these issues and evaluate possible explanations e.g., that
the invariant set consists only of critical points of the dynamical
system, which are repulsive except for the target state, necessitates
several steps, including careful characterization of the invariant set,
the set of critical points of the Lyapunov function, and a stability
analysis of the critical points of the dynamical system.  Analysis of
the critical points of the Lyapunov function as considered, e.g., in
\cite{Wu} recently, is not sufficient, as the dynamics is not a gradient
flow.  However, we can still use this analysis to show that the Lyapunov
function is a Morse function if we restrict our attention to generic
states.  This allows us to draw certain conclusions about the dimensions
of the stable (unstable) manifolds of the hyperbolic critical points of
the dynamical system, from which we can draw general conclusions about
the effectiveness of the method in steering the system to the target.

The paper is organized as follows.  In Sec.~\ref{sec:basics} the control
problem is defined and some basic issues such as different notions of
convergence for non-stationary target state are briefly discussed.  In
Sec.~\ref{sec:LaSalle} we formulate the control dynamics as an
autonomous dynamical system defined on an extended space including the
system and the target state, allowing us to apply LaSalle's invariance
principle~\cite{lasalle} to obtain a characterization of the LaSalle
invariant set.  This set is shown to depend on the Hamiltonian as well
as the target state.  In Sec.~\ref{sec:critical} we determine the
critical points of the Lyapunov function on the extended space and
analyze their stability for generic target states.  In
Sec.~\ref{sec:conv_ideal} we analyze the stability of the critical
points of the dynamical system in terms of the eigenvalues of the
linearized system.  Rigorous results are derived for generic stationary
target states and somewhat weaker stability results are given in the
non-stationary case.  In Section~\ref{sec:conv_real} it is shown
explicitly that relaxing the strict requirements on the Hamiltonian
leads to loss of asymptotic stability due to the emergence of center
manifolds about the target state.  Finally, we compare and analyze the
difference between our result and the argument in one preceding
work~\cite{altafini2}.

\section{Natural Trajectory Tracking Problem for Quantum Systems}
\label{sec:basics}

\subsection{Quantum states and evolution}

The state of a quantum system defined on an $n$-dimensional Hilbert
space $\H\simeq \CC^n$ can be represented by a density operator
$\rho$, i.e., an $n\times n$ positive hermitian operator with unit
trace, whose evolution is governed by the following equation:
\begin{eqnarray*}
 i\hbar \dot \rho(t) = [ H, \rho(t) ] =  H\rho(t) - \rho(t) H,
\end{eqnarray*}
where $H$ is an $n\times n$ Hermitian operator representing the system
Hamiltonian, and we shall choose units such that $\hbar=1$.  When the
system is not closed, i.e., interacts with an external environment,
additional terms are required to account for dissipative effects, and 
in the Markovian case the evolution is described by Lindblad master
equation~\cite{Breuer}, for example.  In the following, we will restrict
our analysis to Hamiltonian systems and to an important class of mixed
states, we shall refer to as generic states.  The same analysis can be
applied to density operators representing non-generic states, although
the results will be different.

\begin{definition}
A density operator $\rho$ represents a \emph{generic} state if it
has $n$ distinct eigenvalues.
\end{definition}

\subsection{Control problem}
\label{subsec:problem}

We study the bilinear Hamiltonian control system
\begin{equation}
  \dot \rho(t) = -i [ H_0 + f(t)H_1, \rho(t) ],
\end{equation}
where $f(t)$ is an admissible real-valued control field and $H_0$ and
$H_1$ are a system and control interaction Hamiltonian, respectively,
both of which will be assumed to be time-independent.  The general
control problem is to design a certain control function $f(t)$ such that
the system state $\rho(t)$ with $\rho(0)=\rho_0$ converges to the target
state $\rho_d$ for $t\to\infty$.  We shall assume here that the initial
and target states, $\rho_0$ and $\rho_d$, have the same spectrum because
this is a necessary condition for the target state to be reachable under
unitary evolution.

Since the free Hamiltonian $H_0$ can usually not be turned off, any
target state will evolve according to
\begin{equation}
\label{eqn:3} \dot \rho_d(t) =-i [ H_0, \rho_d(t) ].
\end{equation}
It is easy to see that $\rho_d$ is stationary if and only if it commutes
with $H_0$, $[H_0,\rho_d(0)]=0$.  For any state that does not commute
with $H_0$ the quantum control problem becomes a ``natural tracking
problem''~\cite{Bohacek}, where the objective generally is to find a
control $f(t)$ such that the trajectory $\rho(t)$ with initial state
$\rho_0$ under the controlled evolution asymptotically converges to the
trajectory of $\rho_d(t)$.

\subsection{Trajectory tracking vs orbit tracking}
\label{subsec:tracking}

When the target state $\rho(t)$ is non-stationary, we can have two
different control objectives.  We could require $\rho(t)\to\rho_d(t)$,
as $t\to\infty$, which is known as trajectory tracking; alternatively,
we could require $\rho(t)\to\O(\rho_d(t))$, which is known orbit
tracking, where $\O(\rho_d(t))$ is the orbit of $\rho_d$, defined to be
the set of points $\rho_d(t)$ passes through. By definition, the notion
of orbit tracking is weaker than that of trajectory tracking.  In
particular, the set $\O(\rho_d(t))$ can be very large for states that
follow \emph{non-periodic} trajectories, which is the case for most
states, except for systems of Hilbert space dimension $2$, and higher
dimensional systems for which the eigenvalues of $H_0$ are commensurate
(rational multiples of each other).  Therefore in this work we focus on
quantum state control in the sense of trajectory tracking as this is the
strongest notion of convergence and well-defined for any trajectory.

\section{Lyapunov-based Control}
\label{sec:LaSalle} 

A natural design for the control $f(t)$ is inspired by the construction
of a Lyapunov function $V(\rho,\rho_d)$.  We try to find a control such
that $\frac{d}{dt}V(\rho(t),\rho_d(t))\le 0$, i.e., $V$ decreases along
any trajectory.  If $V(\rho(t),\rho_d(t))$ decreases to zero, we have
$\rho(t)\to \rho_d(t)$.

Define $\M$ to be the set of density operators isospectral with
$\rho_d(0)$. $\M$ is a compact manifold, whose dimension depends on the
spectrum of $\rho_d(0)$.  For a generic $\rho_d$ with $n$ distinct
eigenvalues we have $\M\simeq \mathbb{U}(n)/\mathbb{U}(1)\times \ldots
\times \mathbb{U}(1)$, where the denominator has $n$ factors and $\M$
has dimension $n^2-n$.  Consider the joint dynamics for
$(\rho(t),\rho_d(t))$ on $\M\times\M$:
\begin{subequations}
\label{eqn:auto0}
\begin{align}
\dot \rho(t) &=-i [ H_0+f(t)H_1, \rho(t) ],\\
\dot \rho_d(t) &=-i [ H_0, \rho_d(t) ].
\end{align}
\end{subequations}
The Hilbert-Schmidt norm $\norm{A}=\sqrt{\Tr(A^\dagger A)}$ induces
a natural distance function on $\M\times\M$, which provides a
natural candidate for a Lyapunov function
\begin{equation}
\label{eqn:4}
  V(\rho,\rho_d) = \frac{1}{2}\norm{\rho-\rho_d}^2
                 = \frac{1}{2}\Tr[(\rho-\rho_d)^2].
\end{equation}
Since $\rho$ and $\rho_d$ are required to be isospectral, we have
$\Tr(\rho^2)=\Tr(\rho_d^2)$, and hence
\begin{equation}
 \label{eq:4a}
 V(\rho,\rho_d) = \Tr[\rho_d^2(t)]- \Tr[\rho(t)\rho_d(t)],
\end{equation}
the Lyapunov function used in~\cite{altafini2}.  If
$\rho_d=\ket{\psi_d}\bra{\psi_d}$ and $\rho=\ket{\psi}\bra{\psi}$ we
have furthermore
\begin{equation}
 \label{eq:4b}
 V(\psi,\psi_d) = 1- |\ip{\psi_d(t)}{\psi(t)}|^2,
\end{equation}
a Lyapunov function often used for pure-state
control~\cite{Mirrahimi-implicit}. Choosing
\begin{equation}
 \label{eqn:5}
  f(\rho,\rho_d) = \kappa \Tr([-iH_1,\rho]\rho_d), \quad \kappa>0,
\end{equation}
we find $\dot V(\rho(t),\rho_d(t)) \le 0$.  Without loss of
generality, we set $\kappa=1$ in the following.

Given the target state $\rho_d(0)$, the dynamics under the Lyapunov
control is described by the following nonlinear autonomous dynamical
system on $\M\times\M$:
\begin{subequations}
\label{eqn:auto}
\begin{align}
 \dot{\rho(t)}   &= -i [ H_0+f(\rho,\rho_d)H_1, \rho(t) ],\\
 \dot{\rho_d(t)} &= -i [ H_0, \rho_d(t) ],\\
 f(\rho,\rho_d)  &= \Tr([-iH_1,\rho]\rho_d),
\end{align}
\end{subequations}

\subsection{LaSalle invariance principle}
\label{subsec:invar}

\begin{theorem}
[LaSalle invariance principle~\cite{lasalle}]\label{thm:lasalle:1}
$\dot x=f(x)$ be an autonomous dynamical system, $V(x)$ be a Lyapunov
function on the phase space $\Omega=\{x\}$, satisfying $V(x)>0$ for all
$x \neq x_0$ and $\dot{V}(x) \le 0$, and let $\O(x(t))$ be the orbit of
$x(t)$ in the phase space.  Then the invariant set 
$E=\{\O(x(t))|\dot{V}(x(t))=0\}$ contains the positive limiting sets of
all bounded solutions, i.e., any bounded solution converges to $E$ as
$t\to+\infty$.
\end{theorem}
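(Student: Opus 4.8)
The plan is to run the classical proof of LaSalle's principle, specialized to these hypotheses. Fix a bounded solution $x(t)$ defined for $t\ge 0$ and let $K$ be the closure of its forward orbit, a compact subset of $\Omega$. Since $\dot V(x(t))\le 0$, the map $t\mapsto V(x(t))$ is non-increasing; since $V$ is continuous and $K$ compact, $V$ is bounded below on $K$, so $V(x(t))$ decreases to a finite limit $c$ as $t\to+\infty$.

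Next I would invoke the standard structure theorem for $\omega$-limit sets of bounded trajectories of an autonomous flow: the positive limit set $L^+=\bigcap_{T\ge 0}\overline{\{x(t):t\ge T\}}$ is nonempty, compact, connected, invariant under the flow, and $\operatorname{dist}(x(t),L^+)\to 0$ as $t\to+\infty$. This is the one step that requires genuine work rather than bookkeeping: nonemptiness and compactness come from a nested-compact-sets argument, while invariance follows from a diagonal-sequence argument using continuous dependence of solutions on initial data (so we tacitly assume $f$ is regular enough for existence, uniqueness and continuous dependence to hold). I expect this to be the main obstacle, in the sense that it is the only non-elementary ingredient; everything else is a short consequence of the Lyapunov hypotheses.

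Those hypotheses then finish the argument. For any $y\in L^+$, pick $t_k\to+\infty$ with $x(t_k)\to y$; continuity of $V$ gives $V(y)=\lim_k V(x(t_k))=c$, so $V\equiv c$ on $L^+$. Because $L^+$ is invariant, the solution through any $y\in L^+$ stays in $L^+$, hence $V$ is constant along it, hence $\dot V\equiv 0$ on that orbit. Therefore $L^+\subseteq\{x:\dot V(x)=0\}$, and since $L^+$ is a union of complete orbits it lies in $E$. Combined with $\operatorname{dist}(x(t),L^+)\to 0$ this yields $x(t)\to E$, which is the assertion.

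Two points are worth flagging for the rest of the paper. First, the conclusion is convergence to the \emph{set} $E$, not to any distinguished point of it. Second, the positivity hypothesis $V(x)>0$ for $x\neq x_0$ plays no role in this version of the statement; it becomes relevant only in the special situation where one can further show $E=\{x_0\}$, so as to upgrade the conclusion to $x(t)\to x_0$. In the control problem considered here $E$ will generally be strictly larger than $\{x_0\}$, and identifying $E$ — and then deciding which of its points actually attract trajectories — is precisely the work the subsequent sections must carry out.
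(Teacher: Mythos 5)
Your argument is correct: it is the standard proof of LaSalle's invariance principle (compactness of the closure of the forward orbit, monotone convergence of $V$ along the trajectory, the structure theorem for the positive limit set, constancy of $V$ on it, and invariance forcing $\dot V\equiv 0$ there). The paper itself does not prove this theorem at all --- it is quoted from the cited reference of LaSalle and Lefschetz --- so there is nothing to diverge from; note also that your closing observation, that the positivity hypothesis $V(x)>0$ for $x\neq x_0$ is not actually used and that $E$ is generally much larger than $\{x_0\}$, is exactly the point the paper makes in the Remark following the theorem and in its subsequent characterization of the invariant set.
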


\begin{remark}
The theorem holds for both real and complex dynamical systems, and what
has basically been proved is that all bounded solutions with $\dot{V}(x)
\ne 0$ converge to the set of solutions with $\dot V(x)=0$.  It does not
matter if $V(x)=0$ for many $x$, as is the case for (\ref{eqn:auto}),
for which $V$ vanishes on the entire set $\{(\rho,\rho_d)\in \M\times\M:
\rho=\rho_d\}$.
\end{remark}

The quantum system~(\ref{eqn:auto}) defined on the extended phase space
$\M\times\M$ is autonomous and any solution $(\rho(t),\rho_d(t))$ is
bounded.  Although the Lyapunov function~(\ref{eq:4a}) is not positive
definite, $V=0$ if and only if $\rho=\rho_d$, which is sufficient to
apply the LaSalle invariance principle~\ref{thm:lasalle:1} to obtain:

\begin{theorem}
\label{thm:lasalle:2} Any trajectory $(\rho(t),\rho_d(t))$ under the
Lyapunov control~(\ref{eqn:5}) will converge to the invariant set
$E=\{(\rho_1,\rho_2) \in \M\times\M: \dot{V}(\rho(t),\rho_d(t))=0,\;
(\rho(0),\rho_d(0))=(\rho_1,\rho_2)\}$.
\end{theorem}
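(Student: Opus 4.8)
The plan is to verify the hypotheses of Theorem~\ref{thm:lasalle:1} directly for the system~(\ref{eqn:auto}) on the phase space $\Omega=\M\times\M$ and then invoke it. First I would check that the system is genuinely autonomous: since $H_0$ and $H_1$ are time-independent and the feedback $f(\rho,\rho_d)=\Tr([-iH_1,\rho]\rho_d)$ depends only on the current point $(\rho,\rho_d)$, the right-hand side of~(\ref{eqn:auto}) carries no explicit time dependence. Next I would note that $\M$ is compact --- for a generic spectrum it is the unitary orbit $\UU(n)/(\UU(1)\times\cdots\times\UU(1))$, hence a closed submanifold of the bounded set of density operators --- so $\M\times\M$ is compact and \emph{every} solution $(\rho(t),\rho_d(t))$ is bounded. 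Thus the boundedness hypothesis of Theorem~\ref{thm:lasalle:1} is automatic and applies to all trajectories, not merely some.

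The second ingredient is the sign condition on $V$ and $\dot V$. From~(\ref{eqn:4}), $V(\rho,\rho_d)=\tfrac12\norm{\rho-\rho_d}^2\ge 0$ is smooth on $\M\times\M$ and vanishes exactly on the diagonal, so $V=0$ iff $\rho=\rho_d$. For $\dot V$ I would differentiate the equivalent form~(\ref{eq:4a}): $\Tr[\rho_d(t)^2]$ is constant because $\rho_d$ evolves unitarily, so
\begin{equation*}
 \dot V = -\frac{d}{dt}\Tr[\rho(t)\rho_d(t)]
        = -\Tr[\dot\rho\,\rho_d] - \Tr[\rho\,\dot\rho_d].
\end{equation*}
Substituting $\dot\rho=-i[H_0+fH_1,\rho]$ and $\dot\rho_d=-i[H_0,\rho_d]$ and using cyclicity of the trace, the two $H_0$ contributions combine into $-i\Tr[[H_0,\rho\rho_d]]=0$, leaving $\dot V = -f\,\Tr([-iH_1,\rho]\rho_d)=-f^2\le 0$ for the choice~(\ref{eqn:5}) with $\kappa=1$; here one also checks, using that $H_1,\rho,\rho_d$ are Hermitian, that $-i[H_1,\rho]$ is Hermitian and hence $f$ is real, so $\dot V$ is a genuine square. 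This establishes $\dot V\le 0$ along every trajectory.

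With these facts in hand, Theorem~\ref{thm:lasalle:1} applies verbatim: since all solutions are bounded and $\dot V\le 0$, every positive limit set lies in the largest invariant set contained in $\{\dot V=0\}$, which is precisely the set $E$ in the statement (the union of complete orbits along which $\dot V$ vanishes identically). The only point that merits comment --- and which the Remark after Theorem~\ref{thm:lasalle:1} already addresses --- is that $V$ is not positive definite in the classical sense, as it vanishes on the whole diagonal rather than at a single point; but the conclusion of Theorem~\ref{thm:lasalle:1} uses only boundedness together with $\dot V\le 0$, so this degeneracy is irrelevant for characterizing $E$. It does mean, however, that convergence to $E$ does not by itself imply $\rho(t)\to\rho_d(t)$: that requires identifying which points of $E$ are actually attracting, the subject of Sec.~\ref{sec:critical} and Sec.~\ref{sec:conv_ideal}. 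In short, the substantive difficulty lies not in this theorem but in pinning down the structure of $E$ afterwards; the proof of Theorem~\ref{thm:lasalle:2} itself is a routine verification of the LaSalle hypotheses.
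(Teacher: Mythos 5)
Your proof is correct and takes essentially the same route as the paper: one verifies that the extended system~(\ref{eqn:auto}) on $\M\times\M$ is autonomous with every solution bounded, that $V\ge 0$ vanishes only when $\rho=\rho_d$, and that the feedback~(\ref{eqn:5}) yields $\dot V=-f^2\le 0$, and then applies Theorem~\ref{thm:lasalle:1} directly. Your explicit computation of $\dot V$ and the comment on the lack of strict positive definiteness merely spell out details the paper treats more briefly (the latter in its Remark following Theorem~\ref{thm:lasalle:1}).
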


We note here that except when $\rho_d$ is a stationary state, we must
consider the dynamical system on the extended phase space $\M\times\M$
since $V(\rho,\rho_d)$ is \emph{not} well-defined on $\M$. Having
established convergence to the LaSalle invariant set $E$, the next step
is to characterize $E$ for the dynamical system~(\ref{eqn:auto}).

\subsection{Characterization of the LaSalle Invariant Set}
\label{subsec:charE}

LaSalle's invariance principle reduces the convergence analysis to
calculating the invariant set.  It has been argued
(e.g.~\cite{altafini2,D'Alessandro}) that the invariant set consists of
all points $\rho$ that commute with the target state $\rho_d$, i.e.,
$[\rho_d,\rho]=0$.  However, this characterization is only valid for
ideal systems and stationary target states.  Thus we shall first
reconsider the characterization of the invariant set.  Following
standard techniques in nonlinear stability analysis~\cite{Jurdjevic}, we
shall see that the invariant set of the autonomous dynamical system
(\ref{eqn:auto}) defined on the extended state space $\M\times\M$
comprises all pairs $(\rho_1,\rho_2)$ whose commutator is diagonal even
for ideal systems, and it is much larger for non-ideal systems.

It is easy to see that $\dot{V}\equiv 0$ is equivalent to $f(t)\equiv
0$ and a standard argument shows that
\begin{equation*}
\begin{aligned}
0 &= f      = \Tr([-iH_1,\rho]\rho_d)\\
0 &= \dot f = \Tr([-iH_1,\rho]\dot \rho_d)+\Tr([-iH_1,\dot \rho]\rho_d)\\
  & \qquad  = -\Tr([\Ad_{-iH_0}(-iH_1),\rho]\rho_d)\\
  & \cdots \\
0 &= \frac{d^{\ell}f}{dt^\ell}=(-1)^n
\Tr([\Ad^\ell_{-iH_0}(-iH_1),\rho]\rho_d),
\end{aligned}
\end{equation*}
where $B_m=\Ad^m_{-iH_0}(-iH_1)$ represents $m$-fold commutator adjoint
action of $-iH_0$ on $-iH_1$.  Noting $\Tr([A,B]C)=\Tr([A,[B,C]])$ shows
that the LaSalle invariant set consists of all $(\rho_1,\rho_2)$ such that
\begin{equation}
\label{eq:trace-cond1}
 \Tr(B_m[\rho_1,\rho_2])=0.
\end{equation}
If we set $\B^s=\Span\{B_m\}_{m=1}^{m=s}$ and
$\B_0^s=\Span\{B_m\}_{m=0}^{m=s}$ with $B_0=-iH_1$ then
(\ref{eq:trace-cond1}) is equivalent to $[\rho,\rho_d]$ orthogonal to
the subspace $\B=\B_0^\infty$ with respect to the Hilbert-Schmidt inner
product.  The Lie algebra $\su(n)$ can be decomposed into an abelian
part, the Cartan subalgebra $\C=\Span\{\lambda_k\}_{k=1}^{n-1}$,
and an orthogonal subspace $\T$, which is a direct sum of $n(n-1)/2$
root spaces spanned by pairs of generators
$\{\lambda_{k\ell},\bar{\lambda}_{k\ell}\}$ (see
Appendix~\ref{app:lie_algebra_basis}).

Choosing a Hilbert space basis such that $H_0$ is diagonal $H_0=
\diag(a_1,\ldots,a_n)$ with $a_k\ge a_{k+1}$ and
$\omega_{k\ell}=a_\ell-a_k$, which is always possible as $H_0$ is
Hermitian, setting $b_{k\ell}=\alpha_{k\ell}+i\beta_{k\ell}$ and
expanding $-iH_1\in\su(n)$ with respect to the basis~(\ref{eq:lambda})
gives
\begin{equation*}
 \label{eq:H1}
  -i H_1 = \sum_{k=1}^{n-1}\left[ b_k \lambda_k + \sum_{\ell=k+1}^n
           -\alpha_{k\ell} \lambda_{k\ell} + \beta_{k\ell}
           \bar{\lambda}_{k\ell}\right].
\end{equation*}
Noting that $H_0$ is diagonal and using (\ref{eq:lambda_comm}) gives
\begin{equation}
\label{eq:Bm}
\begin{split}
  B_{2m-1} &= \sum_{k=1}^{n-1}\sum_{\ell=k+1}^n \!\!\!(-1)^m
\omega_{k\ell}^{2m-1}
                       [\alpha_{k\ell} \bar{\lambda}_{k\ell}
                        +\beta_{k\ell} \lambda_{k\ell}], \\
  B_{2m}   &= \sum_{k=1}^{n-1}\sum_{\ell=k+1}^n (-1)^m \omega_{k\ell}^{2m}
                       [\alpha_{k\ell} \lambda_{k\ell}
                        -\beta_{k\ell} \bar{\lambda}_{k\ell}].
\end{split}
\end{equation}

\begin{definition}
The Hamiltonian of the dynamical system (\ref{eqn:auto}) is called
ideal if
\begin{itemize}
\item[(i)] $H_0$ is strongly regular, i.e.,
           $\omega_{k\ell}\neq \omega_{pq}$ unless $(k,\ell)=(p,q)$.
\item[(ii)] $H_1$ is fully connected, i.e., $b_{k\ell}\neq 0$ except
            (possibly) for $k=\ell$.
\end{itemize}
\end{definition}

\begin{theorem}
\label{thm:lasalle:3} The subspace $\B^{n^2-n}$ generated by the
Ad-brackets is equal to $\T$ if and only if the Hamiltonian is ideal.
\end{theorem}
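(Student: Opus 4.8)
The strategy is to reduce the claim to two Vandermonde rank computations, exploiting the root-space decomposition $\su(n)=\C\oplus\T$ and the fact that $\Ad_{-iH_0}$ acts block-diagonally, preserving each of the $N:=n(n-1)/2$ two-dimensional root spaces $\Span\{\lambda_{k\ell},\bar\lambda_{k\ell}\}$. Two preliminary remarks do most of the bookkeeping. First, since $H_0=\diag(a_1,\dots,a_n)$ with $a_1\ge\cdots\ge a_n$, every $\omega_{k\ell}=a_\ell-a_k$ with $k<\ell$ is non-positive, so strong regularity (which in particular forces the eigenvalues of $H_0$ to be distinct, hence all $\omega_{k\ell}\ne0$, and makes the $\omega_{k\ell}$ pairwise distinct) in fact makes the $\omega_{k\ell}^2$ pairwise distinct and nonzero. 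Second, by~(\ref{eq:Bm}) every $B_m$ has vanishing Cartan component, so $\B^{n^2-n}\subseteq\T$; as $\dim\T=n^2-n$, it suffices to decide when $\dim\B^{n^2-n}=n^2-n$, i.e.\ when $B_1,\dots,B_{n^2-n}$ are linearly independent.

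The key observation is that within the $(k,\ell)$ root space all the $B_m$ lie along only two fixed directions, $w^{\mathrm e}_{k\ell}=\alpha_{k\ell}\lambda_{k\ell}-\beta_{k\ell}\bar\lambda_{k\ell}$ and $w^{\mathrm o}_{k\ell}=\beta_{k\ell}\lambda_{k\ell}+\alpha_{k\ell}\bar\lambda_{k\ell}$: reading off~(\ref{eq:Bm}), $B_{2p}=\sum_{k<\ell}(-\omega_{k\ell}^2)^p\,w^{\mathrm e}_{k\ell}$ and $B_{2p-1}=\sum_{k<\ell}(-1)^p\omega_{k\ell}^{2p-1}\,w^{\mathrm o}_{k\ell}$ for $p=1,\dots,N$, which uses precisely the $2N=n^2-n$ brackets spanning $\B^{n^2-n}$. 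Hence $\B^{n^2-n}=U^{\mathrm e}+U^{\mathrm o}$ with $U^{\mathrm e}=\Span\{B_2,\dots,B_{2N}\}$ and $U^{\mathrm o}=\Span\{B_1,\dots,B_{2N-1}\}$. Now $w^{\mathrm e}_{k\ell}$ and $w^{\mathrm o}_{k\ell}$ are independent exactly when $\alpha_{k\ell}^2+\beta_{k\ell}^2=|b_{k\ell}|^2\ne0$; if $H_1$ is fully connected this holds for every pair, so $U^{\mathrm e}$ and $U^{\mathrm o}$ lie in the complementary subspaces $\bigoplus_{k<\ell}\RR w^{\mathrm e}_{k\ell}$ and $\bigoplus_{k<\ell}\RR w^{\mathrm o}_{k\ell}$ and meet only in $0$, while $\dim U^{\mathrm e}$ (and, after the nonzero rescaling $w^{\mathrm o}_{k\ell}\mapsto\omega_{k\ell}^{-1}w^{\mathrm o}_{k\ell}$, also $\dim U^{\mathrm o}$) equals the rank of the $N\times N$ matrix $[(-\omega_{k\ell}^2)^p]_{p,(k\ell)}$. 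That matrix is a Vandermonde matrix in the $-\omega_{k\ell}^2$ times a diagonal factor, hence invertible exactly when the $\omega_{k\ell}^2$ are nonzero and pairwise distinct. Combining with the first paragraph, an ideal Hamiltonian forces $\dim U^{\mathrm e}=\dim U^{\mathrm o}=N$, so $\dim\B^{n^2-n}=2N=\dim\T$ and $\B^{n^2-n}=\T$, giving the ``if'' direction.

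For the converse I argue the contrapositive, producing in each non-ideal case a proper subspace of $\T$ that contains every $B_m$. If $b_{k_0\ell_0}=0$ then $w^{\mathrm e}_{k_0\ell_0}=w^{\mathrm o}_{k_0\ell_0}=0$, so all $B_m$ have vanishing $(k_0,\ell_0)$ component; if some $\omega_{k_0\ell_0}=0$, the scalar coefficients $(-\omega_{k_0\ell_0}^2)^p$ and $(-1)^p\omega_{k_0\ell_0}^{2p-1}$ vanish for all $p\ge1$, with the same consequence. Finally, if $\omega_{k_1\ell_1}=\omega_{k_2\ell_2}$ for two distinct pairs (the common value nonzero, else the previous case), then restricted to the four-dimensional sum of these two root spaces every $B_{2p}$ is a scalar multiple of the single vector $(w^{\mathrm e}_{k_1\ell_1},w^{\mathrm e}_{k_2\ell_2})$ and every $B_{2p-1}$ a scalar multiple of $(w^{\mathrm o}_{k_1\ell_1},w^{\mathrm o}_{k_2\ell_2})$, so the image of $\{B_m\}$ there is at most two-dimensional and $\dim\B^{n^2-n}\le(n^2-n)-2$. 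In all cases $\B^{n^2-n}\subsetneq\T$, completing the proof.

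I expect the main difficulty to be bookkeeping rather than conceptual: extracting the precise block action of $\Ad_{-iH_0}$ from~(\ref{eq:lambda_comm}) and the explicit form~(\ref{eq:Bm}), lining up the parity split and index ranges so that exactly the brackets $B_1,\dots,B_{n^2-n}$ are used, and --- the one genuinely substantive step --- verifying that strong regularity together with the common sign of the $\omega_{k\ell}$ upgrades ``pairwise distinct'' to ``pairwise distinct squares'', which is what makes the Vandermonde determinant nonvanishing.
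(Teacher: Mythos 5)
Your proof is correct and follows essentially the same route as the paper's: both reduce the claim to showing that $B_1,\ldots,B_{n^2-n}$ span $\T$ via the odd/even parity split and a Vandermonde rank argument in the variables $-\omega_{k\ell}^2$, with your per-root-space directions $w^{\mathrm e}_{k\ell},w^{\mathrm o}_{k\ell}$ playing the role of the paper's trace-orthogonality computation between $\B_{\rm odd}$ and $\B_{\rm even}$. You do go a bit further than the printed proof in two useful respects: you make explicit that the ordering convention $a_k\ge a_{k+1}$ (all $\omega_{k\ell}$ of one sign) is what upgrades ``distinct $\omega_{k\ell}$'' to ``distinct $\omega_{k\ell}^2$'', which the paper uses silently, and you actually establish the ``only if'' direction by exhibiting, in each non-ideal case, a proper subspace of $\T$ containing every $B_m$, a direction the paper states but does not spell out.
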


\begin{proof}
The technique in this proof is a direct application of the property
of Vandermonde matrix, which has also been applied to discuss the
controllability~\cite{Leite,altafini-root}. Since the dimension of
$\T$ is $n^2-n$ and $B_m\in\T$ for all $m>0$, it suffices to show
that the elements $B_m$ for $m=1,\ldots,n^2-n$ are linearly
independent.  Moreover, the subspaces spanned by the odd and even
order elements, $\B_{\rm odd}^s=\Span\{B_{2m-1}: 1\le 2m-1\le s\}$
and $\B_{\rm even}^s = \Span\{B_{2m}: 1\le 2m\le s\}$, respectively,
are orthogonal since
\begin{multline*}
 B_{2m-1} B_{2m'}
 = (-1)^{m+m'}\sum_{k,\ell}\sum_{k',\ell'}
   \omega_{k\ell}^{2m-1}
   \omega_{k'\ell'}^{2m'} \times \\
  [\Re(b_{k\ell}) \Re(b_{k'\ell'}) \bar{\lambda}_{k\ell} \lambda_{k'\ell'}
  -\Im(b_{k\ell}) \Im(b_{k'\ell'}) \lambda_{k\ell} \bar{\lambda}_{k'\ell'}\\
  -\Re(b_{k\ell}) \Im(b_{k'\ell'}) \bar{\lambda}_{k\ell}\bar\lambda_{k'\ell'}
  +\Im(b_{k\ell}) \Re(b_{k'\ell'}) \lambda_{k\ell} \lambda_{k'\ell'}],
\end{multline*}
and thus observing the equalities
\begin{subequations}
\begin{gather}
  \Tr(\lambda_{k\ell}\lambda_{k'\ell'})
  = \Tr(\bar\lambda_{k\ell}\bar\lambda_{k'\ell'})
  = -2\delta_{kk'}\delta_{\ell\ell'} \\
   \Tr(\lambda_{k\ell}\bar\lambda_{k'\ell'}) =0
\end{gather}
\end{subequations}
shows that for all $m,m'>0$
\begin{multline*}
 \Tr(B_{2m-1} B_{2m'})
 = (-1)^{m+m'}\sum_{k,\ell}\sum_{k',\ell'}
   \omega_{k\ell}^{2m-1} \omega_{k'\ell'}^{2m'} \times \\
   \Re(b_{k\ell}) \Im(b_{k\ell})[-\bar{\lambda}_{k\ell}^2+\lambda_{k\ell}^2]=0.
\end{multline*}
Thus it suffices to show that the elements of $\B_{\rm odd}^{n^2-n}$
and $\B_{\rm even}^{n^2-n}$ are linearly independent separately.

For the odd terms, suppose there exists a vector
$\vec{c}=(c_1,\ldots,c_s)^T$ of length $s=n(n-1)/2$ such that
$\sum_{m=1}^s c_m B_{2m-1}=0$.  Noting that $\omega_{kk}=0$ and
$(\omega_{\ell k})^2=(-\omega_{k\ell})^2$ this gives $n(n-1)/2$
non-trivial equations
\begin{equation}
\label{eq:lin_indep}
  \omega_{k\ell} [\Re(b_{k\ell}) \bar{\lambda}_{k\ell}
                        +\Im(b_{k\ell}) \lambda_{k\ell}] \;
  \sum_{m=1}^{s} (-\omega_{k\ell}^2)^{m-1} c_m=0,
\end{equation}
for $1\le k<\ell\le n$.  Since $\omega_{k\ell}\neq 0$, $b_{k\ell} \neq0$
by hypothesis, Eq.~(\ref{eq:lin_indep}) can be reduced to $\Omega
\vec{c}=\vec{0}$, where $\Omega$ is a matrix:
\begin{equation}
  \label{eq:vandermonde}
  \Omega = \begin{bmatrix}
    1 & -\omega_{12}^2 & \omega_{12}^4 & \ldots & (-\omega_{12}^2)^{m-1} \\
    1 & -\omega_{13}^2 & \omega_{13}^4 & \ldots & (-\omega_{13}^2)^{m-1} \\
    \vdots & \vdots & \vdots & \ddots & \vdots \\
    1 & -\omega_{n-1,n}^2 & \omega_{n-1,n}^4 & \ldots &
(-\omega_{n-1,n}^2)^{m-1} \\
      \end{bmatrix}.
\end{equation}
Since $\Omega$ is a Vandermonde matrix, condition~(ii) of the
proposition guarantees that Eq.~(\ref{eq:lin_indep}) has only the
trivial solution $\vec{c}=\vec{0}$, thus establishing linear
independence.  For the even terms we obtain a similar system of
equations, which completes the proof.
\end{proof}

\begin{theorem}
\label{thm:lasalle:4} Assuming the Hamiltonian of (\ref{eqn:auto})
is ideal, $(\rho_1,\rho_2)$ belongs to the invariant set $E$ if and
only if $[\rho_1,\rho_2]=\diag(c_1,\ldots,c_n)$.
\end{theorem}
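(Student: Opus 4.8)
The plan is to re-express membership in the invariant set $E$ as an orthogonality condition in $\su(n)$ and then read off the conclusion from Theorem~\ref{thm:lasalle:3}. The computation already displayed shows that $(\rho_1,\rho_2)\in E$ precisely when the trajectory issuing from $(\rho_1,\rho_2)$ has $\dot V\equiv 0$, equivalently $f\equiv 0$ along it. Along such a trajectory the control is off and both states evolve under $H_0$ alone, so $[\rho(t),\rho_d(t)] = e^{-iH_0 t}[\rho_1,\rho_2]e^{iH_0 t}$ and hence $f(t)=\Tr\!\left(-iH_1\,e^{-iH_0 t}[\rho_1,\rho_2]e^{iH_0 t}\right)$, an entire function of $t$ whose vanishing for all $t$ is equivalent to the vanishing of every derivative at $t=0$, i.e.\ to $\Tr(B_m[\rho_1,\rho_2])=0$ for all $m\ge 0$ (only finitely many of which are independent, since the $B_m$ span a finite-dimensional subspace of $\su(n)$). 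Thus $(\rho_1,\rho_2)\in E$ iff $[\rho_1,\rho_2]$ is Hilbert-Schmidt-orthogonal to $\B=\RR B_0 + \Span\{B_m\}_{m\ge 1}$.

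I would then use the orthogonal decomposition $\su(n)=\C\oplus\T$ together with the observation that $[\rho_1,\rho_2]\in\su(n)$, being traceless (a commutator) and anti-Hermitian ($\rho_1,\rho_2$ Hermitian). For the ``only if'' direction: if $(\rho_1,\rho_2)\in E$ then $[\rho_1,\rho_2]\perp\Span\{B_m\}_{m\ge 1}$, and since the Hamiltonian is ideal, Theorem~\ref{thm:lasalle:3} identifies this span with $\T$; hence $[\rho_1,\rho_2]\perp\T$, so by the decomposition $[\rho_1,\rho_2]\in\C$, i.e.\ $[\rho_1,\rho_2]=\diag(c_1,\dots,c_n)$. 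For the ``if'' direction: if $[\rho_1,\rho_2]=\diag(c_1,\dots,c_n)\in\C$, then since every $B_m$ with $m\ge 1$ lies in $\T$ (explicit from~(\ref{eq:Bm})) and $\C\perp\T$, all the conditions with $m\ge 1$ hold automatically, and only $\Tr(B_0[\rho_1,\rho_2])=\Tr(-iH_1[\rho_1,\rho_2])$ remains to be checked.

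I expect this last, $m=0$, condition to be the only real obstacle, since it is the ingredient not already contained in Theorem~\ref{thm:lasalle:3}. Writing $B_0=-iH_1$ as its Cartan part $\sum_k b_k\lambda_k$ plus a $\T$-part, the $\T$-part is automatically orthogonal to the diagonal $[\rho_1,\rho_2]$, leaving $\Tr\!\big(\sum_k b_k\lambda_k\,\diag(c_1,\dots,c_n)\big)$, which in general is nonzero. I would close the argument by appealing to the standing convention that $H_1$ has vanishing diagonal in the eigenbasis of $H_0$ --- the natural normalization for a coupling Hamiltonian --- so that $b_k=0$, $B_0\in\T$, $\B=\T$, and the $m=0$ condition is subsumed by the rest; both implications then close. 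Absent that convention, the correct statement is that $(\rho_1,\rho_2)\in E$ iff $[\rho_1,\rho_2]$ is diagonal \emph{and} Hilbert-Schmidt-orthogonal to the Cartan part of $H_1$, the theorem as stated being the clean special case; I would flag this explicitly, since the precise size of $E$ is exactly what the later eigenvalue and stability analysis hinges on.
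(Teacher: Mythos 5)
Your argument is correct and is essentially the paper's own proof: necessity follows from (\ref{eq:trace-cond1}) together with Theorem~\ref{thm:lasalle:3} identifying $\Span\{B_m\}_{m\ge 1}$ with $\T$, and sufficiency from the fact that a diagonal commutator is left fixed by conjugation with the diagonal unitary $e^{-iH_0 t}$, so $f\equiv 0$ along the free flow (your analyticity remark just makes the ``all derivatives at $t=0$'' step explicit). The caveat you flag about the $m=0$ condition is genuine and not a defect of your proof: the paper's sufficiency argument never checks $\Tr(B_0[\rho_1,\rho_2])=0$, which requires the traceless diagonal (Cartan) part of $-iH_1$ to be orthogonal to $[\rho_1,\rho_2]$; the definition of ``ideal'' constrains only the off-diagonal entries $b_{k\ell}$, so the theorem as stated implicitly assumes $H_1$ has no relevant diagonal part in the $H_0$-eigenbasis (as in all of the paper's examples, where $H_1$ is off-diagonal), and under that convention --- or with your amended characterization of $E$ --- the two proofs coincide.
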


\begin{proof}
The necessary part follows from Eq.~(\ref{eq:trace-cond1}) and
Theorem~\ref{thm:lasalle:3}.  By (\ref{eq:trace-cond1}) the states in
the invariant set have to satisfy $\Tr(B[\rho,\rho_d])=0$ for all $B\in
\mathcal{B}^{n^2-n}$, and by Theorem III.4 we have $\mathcal{B}^{n^2-n}
=\T$ for ideal systems.  Thus, $[\rho,\rho_d]$ must be in the subspace
orthogonal to $\T$, which is the Cartan subspace, i.e., the diagonal
(trace-zero) matrices.

For the sufficient part note that 
$\rho_k(t)=e^{-iH_0t}\rho_k e^{iH_0t}$, $k=1,2$,
\[
 [e^{-iH_0t}\rho_1 e^{iH_0t},e^{-iH_0t}\rho_2e^{iH_0t}]
  = e^{-iH_0t}[\rho_1,\rho_2] e^{iH_0t}
\]
and $e^{-iH_0 t}$ diagonal.  
Thus if $[\rho_1,\rho_2]=\diag(c_1,\ldots,c_n)$ then 
\(
  e^{-iH_0t}[\rho_1,\rho_2] e^{iH_0t} =
 \diag(c_1,\ldots,c_n) = [\rho_1,\rho_2]
\) and hence $(\rho_1,\rho_2)\in E$.
\end{proof}

Thus we have fully characterized the invariant set for systems with
ideal Hamiltonians.  The result also shows that even under the most
stringent assumptions on the Hamiltonian, the invariant set is generally
much larger than the desired solution.  Therefore, the invariance
principle alone is not sufficient to establish convergence to the target
state.

\begin{remark}
Theorem~\ref{thm:lasalle:3} shows that for a system with ideal
Hamiltonian the $\Ad$-brackets span the entire tangent space to the
state manifold, $\B^{n^2-n}=\T$, and thus the linearization defined on
the tangent space is controllable.  It is also easy to verify that this 
condition is necessary for controllability of the linearization for 
generic states.
\end{remark}

\section{Lyapunov Function as Morse Function}
\label{sec:critical}

We show the LaSalle invariant set of (\ref{eqn:auto}) always contains
the critical points of the Lyapunov function $V$, and characterize the
stability of these critical points.  Notice that the stability here
refers to the stability of these critical points as stationary states of
the gradient flow induced by the Lyapunov function.  The gradient flow
in general need not be related to the particular dynamics of the system,
e.g., prescribed by the equation of motion~(\ref{eqn:auto}).  Indeed we
shall see that (\ref{eqn:auto}) is not a gradient flow.  The stability
analysis of the critical points of $V$ is still useful, however, because
under certain conditions, for instance, when the stationary state of
(\ref{eqn:auto}) is hyperbolic, the dimensions of the stable (unstable)
manifold of (\ref{eqn:auto}) agree with those of the stable (unstable)
manifold of the gradient flow.

We start with the case where $\rho_d$ is given, and $V(\rho,\rho_d)$ is
effectively a function of $\rho$ on $\M$.  Similar topics have also been
discussed in~\cite{Wu}, but the major point here is that we can find the
critical points of $V(\rho_1,\rho_2)$ on $\M\times\M$ from the critical
points of $V(\rho)$ on $\M$.  Since $\rho$ can be written as $\rho=
U\rho_d U^\dagger$ for some $U$ in the special unitary group $\SU(n)$,
$V$ can also be considered as a function on $\SU(n)$,
$V(U)=V(U\rho_dU^\dag\rho_d)$.  It is easy to see that the critical
points of $V(\rho)$ correspond to those of $V(U)$, and as
$\Tr(\rho_d^2)$ is a constant for a given $\rho_d$, it is equivalent to
find the critical points $U\in\SU(n)$ of
\begin{align}
  \label{eq:J}
  J(U) =\Tr (\rho_d^2)-V(U) = \Tr(U \rho_d U^\dagger \rho_d).
\end{align}

\begin{lemma}
\label{lemma:crit:1} The critical points $U_0$ of $J(U)$ defined
by~(\ref{eq:J}) are such that $[\rho_0,\rho_d]=0$ for
$\rho_0=U_0\rho_d U_0^\dagger$.
\end{lemma}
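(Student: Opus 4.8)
The plan is to compute the first variation of $J(U)$ along one-parameter subgroups of $\SU(n)$ and set it to zero. Concretely, for any $A \in \su(n)$ consider the curve $U(s) = e^{sA} U_0$, so that $\rho(s) = U(s)\rho_d U(s)^\dagger = e^{sA}\rho_0 e^{-sA}$ with $\rho_0 = U_0\rho_d U_0^\dagger$. Then
\begin{equation*}
  \frac{d}{ds}\Big|_{s=0} J(U(s)) = \frac{d}{ds}\Big|_{s=0} \Tr\!\big(e^{sA}\rho_0 e^{-sA}\rho_d\big) = \Tr\big([A,\rho_0]\rho_d\big).
\end{equation*}
Using the cyclic-trace identity $\Tr([A,\rho_0]\rho_d) = \Tr(A[\rho_0,\rho_d])$, criticality of $U_0$ means $\Tr(A[\rho_0,\rho_d]) = 0$ for every $A \in \su(n)$. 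Since $[\rho_0,\rho_d]$ is skew-Hermitian and traceless (a commutator of Hermitian operators is skew-Hermitian, and its trace vanishes), it lies in $\su(n)$; hence pairing it against all of $\su(n)$ via the (negative-definite, hence non-degenerate) Hilbert--Schmidt form forces $[\rho_0,\rho_d] = 0$. Conversely, if $[\rho_0,\rho_d]=0$ the same computation shows every directional derivative vanishes, so $U_0$ is indeed critical; this gives the characterization in both directions.

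The one subtlety worth spelling out is that the tangent space to $\SU(n)$ at $U_0$ is exactly $\{A U_0 : A \in \su(n)\}$, so curves of the form $e^{sA}U_0$ sweep out all tangent directions, and therefore vanishing of $\tfrac{d}{ds}|_{s=0} J(e^{sA}U_0)$ for all $A\in\su(n)$ is equivalent to $dJ_{U_0} = 0$. I would also note that one should be slightly careful about whether ``critical point of $J(U)$ on $\SU(n)$'' is the right notion versus ``critical point of $V(\rho)$ on $\M$'': because $\rho = U\rho_d U^\dagger$ is a submersion from $\SU(n)$ onto $\M$ (its fibers are the cosets of the stabilizer of $\rho_d$), critical points of $V$ on $\M$ pull back to critical points of $J$ on $\SU(n)$ and vice versa, which is the claim already asserted in the text just before the lemma. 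So no extra work is needed there.

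I do not expect a genuine obstacle here — the argument is a two-line variational computation plus the non-degeneracy of the Hilbert--Schmidt form on $\su(n)$. If anything, the only place to be attentive is the sign/normalization bookkeeping in $\tfrac{d}{ds}|_{s=0} e^{sA}\rho_0 e^{-sA} = [A,\rho_0]$ and the identity $\Tr([A,\rho_0]\rho_d) = \Tr(A[\rho_0,\rho_d])$ (equivalently $= -\Tr(A[\rho_d,\rho_0])$), which is the same move already used in the paper's derivation of the invariant-set conditions via $\Tr([A,B]C)=\Tr(A[B,C])$. Everything else is immediate.
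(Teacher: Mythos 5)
Your argument is correct and is essentially the paper's own proof: the paper parameterizes a neighborhood of $U_0$ as $e^{\vec{x}\cdot\vec{\sigma}}U_0$ with $\{\sigma_m\}$ an orthonormal basis of $\su(n)$, sets the first-order term to zero to get $\Tr(\sigma_m[U_0\rho_d U_0^\dagger,\rho_d])=0$ for all $m$, and concludes $[\rho_0,\rho_d]=0$ by non-degeneracy of the Hilbert--Schmidt pairing on $\su(n)$ — exactly your directional-derivative computation with $A$ replaced by the basis elements. No gaps; the converse and submersion remarks are fine but not needed for the stated lemma.
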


\begin{IEEEproof}
Let $\{\sigma_m\}_{m=1}^{n^2-1}$ be the orthonormal basis of $\su(n)$
given in Appendix \ref{app:lie_algebra_basis}.  Any $U\in\SU(n)$ near
the identity $I$ can be written as $U=e^{\vec x\cdot\vec\sigma}$, where
$\vec{\sigma}=(\sigma_1,\ldots,\sigma_{n^2-1})$ and $\vec{x}\in\RR^n$ is
the coordinate of $U$, and any $U$ in the neighborhood of $U_0$ can be
parameterized as $U=e^{\vec{x}\cdot\vec\sigma}U_0$.  Thus (\ref{eq:J})
becomes
\begin{equation}
\label{eq:J:local} 
J = \Tr[ (e^{\xs} U_0) \rho_d (U_0^\dagger e^{-\xs})  \rho_d].
\end{equation}
At the critical point $U_0$, $\nabla J=0$ implies
\begin{align}
 0= \Tr(\sigma_m[U_0\rho_d U_0^\dagger, \rho_d]) \quad \forall m.
\end{align}
Thus $[U_0\rho_d U_0^\dagger,\rho_d]\in\su(n)$ is orthogonal to all
basis elements $\sigma_m$, and therefore $[U_0\rho_d
U_0^\dagger,\rho_d]=0$.
\end{IEEEproof}

More generally, for $V(\rho_1,\rho_2)$ defined on $\M\times\M$, with
$\rho_1 = U_1\rho_d U_1^\dag$ and $\rho_2 = U_2\rho_dU_2^\dag$, the
critical points of $J(\rho_1,\rho_2)=\Tr(\rho_1\rho_2)$ and
$V(\rho_1,\rho_2)=\Tr(\rho^2)-\Tr(\rho_1\rho_2)$ coincide as
$\Tr(\rho_2^2)=\Tr(\rho_d^2)$ is constant.
\begin{align*}
J(U_1,U_2) &=\Tr\Big( U_1\rho_d U_1^\dag U_2\rho_d U_2^\dag\Big)\\
           &=\Tr\Big((U_2^\dag U_1)\rho_d(U_2^\dag U_1)^\dag\rho_d\Big)
\end{align*}
together with Lemma~\ref{lemma:crit:1} shows that $J$ attains its
critical value when $[(U_2^\dag U_1)\rho_d(U_2^\dag U_1)^\dag,
\rho_d]=0$ and
\begin{align*}
 0 &= U_2 [(U_2^\dagger U_1)\rho_d(U_2^\dagger U_1)^\dagger,
\rho_d]U_2^\dagger\\
   &= [U_1\rho_d U_1^\dagger,U_2 \rho_d U_2^\dagger] = [\rho_1,\rho_2].
\end{align*}
Thus we have the following:
\begin{theorem}
\label{thm:crit:1} For a given target state $\rho_d(0)$, the
critical points of the Lyapunov function $V(\rho_1,\rho_2)$ on
$\M\times\M$ satisfy $[\rho_1,\rho_2]=0$ and thus belong to the
LaSalle invariant set $E$.
\end{theorem}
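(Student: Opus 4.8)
The plan is to reduce the two-argument statement to the one-argument Lemma~\ref{lemma:crit:1}, which has already been established, and then to invoke Theorem~\ref{thm:lasalle:4} (or rather the necessary direction, which holds unconditionally) to conclude membership in $E$. Writing $\rho_1 = U_1 \rho_d U_1^\dagger$ and $\rho_2 = U_2 \rho_d U_2^\dagger$, I would first observe that since $\Tr(\rho_2^2) = \Tr(\rho_d^2)$ is a constant on $\M\times\M$, the critical points of $V(\rho_1,\rho_2)$ coincide with those of $J(U_1,U_2) = \Tr(\rho_1\rho_2) = \Tr(U_1\rho_d U_1^\dagger U_2 \rho_d U_2^\dagger)$. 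By cyclicity of the trace this equals $\Tr\big((U_2^\dagger U_1)\rho_d (U_2^\dagger U_1)^\dagger \rho_d\big)$, which is exactly $J(V)$ from Lemma~\ref{lemma:crit:1} evaluated at $V = U_2^\dagger U_1$.

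The slightly delicate point — and the one I expect to be the main obstacle — is to argue correctly that a critical point of $J$ as a function of the \emph{pair} $(U_1,U_2)$ forces $U_2^\dagger U_1$ to be a critical point of the single-variable functional. The subtlety is that varying $U_1$ and $U_2$ independently gives a \emph{larger} family of variations of the product $U_2^\dagger U_1$ than varying that product alone would; so the vanishing of the full gradient is, a priori, a stronger condition. Concretely, an independent left-invariant variation $U_1 \mapsto e^{s\vec x\cdot\vec\sigma} U_1$ (with $U_2$ fixed) already produces the variation $U_2^\dagger U_1 \mapsto U_2^\dagger e^{s\vec x\cdot\vec\sigma} U_1$ of the product. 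Computing $\frac{d}{ds}\big|_{s=0} J$ along this family and setting it to zero for all $\vec x$ yields, after the same manipulation as in Lemma~\ref{lemma:crit:1}, the condition $[U_1 \rho_d U_1^\dagger, U_2 \rho_d U_2^\dagger] = 0$, i.e. $[\rho_1,\rho_2]=0$ directly — so in fact the $U_1$-variations alone suffice and the potential obstacle dissolves. I would present this as: at a critical point, $0 = \Tr\big(\sigma_m [U_1\rho_d U_1^\dagger, U_2\rho_d U_2^\dagger]\big)$ for all $m$, whence the commutator, lying in $\su(n)$, must vanish.

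Having obtained $[\rho_1,\rho_2] = 0$, the final step is immediate: the zero matrix is in particular diagonal, so $[\rho_1,\rho_2] = \diag(0,\ldots,0)$, and by the necessary direction of Theorem~\ref{thm:lasalle:4} — equivalently, directly from the trace condition~(\ref{eq:trace-cond1}), since $\Tr(B_m \cdot 0) = 0$ trivially for every $B_m$ — the pair $(\rho_1,\rho_2)$ lies in the LaSalle invariant set $E$. I would note that this inclusion of critical points in $E$ does not require the Hamiltonian to be ideal: it follows purely from $[\rho_1,\rho_2]=0$ together with the characterization~(\ref{eq:trace-cond1}) of $E$, which holds for arbitrary $H_0, H_1$. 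The structure of the write-up would therefore be: (1) reduce to $J$; (2) apply Lemma~\ref{lemma:crit:1} via the substitution $V = U_2^\dagger U_1$, taking care with the independent-variation point above; (3) conjugate back to get $[\rho_1,\rho_2]=0$; (4) conclude $(\rho_1,\rho_2)\in E$ from the trace condition.
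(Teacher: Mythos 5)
Your proposal is correct and follows essentially the same route as the paper: reduce $V$ to $J(U_1,U_2)=\Tr(\rho_1\rho_2)=\Tr\bigl((U_2^\dagger U_1)\rho_d(U_2^\dagger U_1)^\dagger\rho_d\bigr)$, apply Lemma~\ref{lemma:crit:1} to the product $U_2^\dagger U_1$, conjugate back by $U_2$ to obtain $[\rho_1,\rho_2]=0$, and conclude $(\rho_1,\rho_2)\in E$ from the commutator/trace characterization~(\ref{eq:trace-cond1}). The only difference is that you explicitly check (via the $U_1$-only variation) that criticality of the pair legitimately reduces to the single-variable criterion, a point the paper passes over silently, and you note that the inclusion in $E$ needs no ideality assumption, which is consistent with the paper's argument.
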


Next, for a given generic $\rho_d$, $J(\rho)=\Tr(\rho\rho_d)$, and
thus $V(\rho)=V(\rho,\rho_d)$, are Morse functions on $\M$, i.e. its
critical points are hyperbolic~\cite{Matsumoto}:
\begin{theorem}
\label{thm:crit:2} If $\rho_d$ is generic then
$J(\rho)=\Tr(\rho\rho_d)$ is a Morse function on $\M$. Two of the
$n!$ hyperbolic critical points correspond to the global maximum and
minimum of $J$, respectively, and the other $n!-2$ points are
saddles with critical values $J_0$ satisfying $J_{\rm min}< J_0 <
J_{\rm max}$.
\end{theorem}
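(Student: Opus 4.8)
The plan is to reduce everything to explicit linear algebra in a basis that diagonalizes $\rho_d$. First I would locate the critical points: by Lemma~\ref{lemma:crit:1} every critical point $\rho_0=U_0\rho_d U_0^\dagger$ satisfies $[\rho_0,\rho_d]=0$, and since $\rho_d$ is generic its eigenvalues $r_1>\cdots>r_n$ are distinct, so $\rho_0$ and $\rho_d$ are simultaneously diagonalizable. Working in the basis where $\rho_d=\diag(r_1,\ldots,r_n)$, the commuting isospectral operators are exactly $\rho_0^{(\pi)}=\diag(r_{\pi(1)},\ldots,r_{\pi(n)})$ for $\pi\in S_n$, and each of these is indeed a critical point; hence there are precisely $n!$ of them, with critical values $J(\rho_0^{(\pi)})=\sum_i r_i r_{\pi(i)}$.

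Next I would compute the Hessian at a fixed critical point $\rho_0=\rho_0^{(\pi)}$. Using the local chart $U=e^{\xs}U_0$ from the proof of Lemma~\ref{lemma:crit:1}, one has $J=\Tr(e^X\rho_0 e^{-X}\rho_d)$ with $X\in\su(n)$, and since the first-order term $\Tr([X,\rho_0]\rho_d)=\Tr(X[\rho_0,\rho_d])$ vanishes at a critical point the quadratic part is $Q(X)=\tfrac12\Tr\bigl([X,[X,\rho_0]]\rho_d\bigr)=-\tfrac12\Tr\bigl([X,\rho_0][X,\rho_d]\bigr)$. Directions $X$ in the Cartan subalgebra $\C$ have $[X,\rho_0]=0$ and span the fibre of $\SU(n)\to\M$; the intrinsic Hessian of $J$ on $T_{\rho_0}\M$ is $Q$ transported along the linear isomorphism $\T\to T_{\rho_0}\M$, $X\mapsto[X,\rho_0]$ (injective for generic $\rho_0$, surjective since $\dim\T=n^2-n$), so the rank and index of the Hessian are those of $Q$ restricted to $\T$. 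Because $\rho_0$ and $\rho_d$ are both diagonal, $Q$ is block-diagonal over the $n(n-1)/2$ root spaces $\Span\{\lambda_{k\ell},\bar\lambda_{k\ell}\}$, and a short $\su(2)$-block calculation using the commutation relations~(\ref{eq:lambda_comm}) yields $Q(a\lambda_{k\ell}+b\bar\lambda_{k\ell})=-c_{k\ell}(a^2+b^2)(r_{\pi(k)}-r_{\pi(\ell)})(r_k-r_\ell)$ with $c_{k\ell}>0$.

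From this the conclusions follow. Genericity of $\rho_d$ gives $r_k\neq r_\ell$, and isospectrality gives $r_{\pi(k)}\neq r_{\pi(\ell)}$, for all $k\neq\ell$, so every block — hence the whole Hessian — is non-singular and $J$ is a Morse function; the Morse index of $J$ at $\rho_0^{(\pi)}$ equals $2\,\#\{k<\ell:(r_{\pi(k)}-r_{\pi(\ell)})(r_k-r_\ell)>0\}$. For $\pi=\mathrm{id}$ every factor is a square, $Q$ is negative definite, and $\rho_0=\rho_d$ is a local — and, since $\sum_i r_i^2-\sum_i r_i r_{\pi(i)}=\tfrac12\sum_i(r_i-r_{\pi(i)})^2\ge0$ with equality only at $\pi=\mathrm{id}$, the strict global — maximum $J_{\max}=\sum_i r_i^2$; for $\pi$ the order-reversing permutation every factor is negative, $Q$ is positive definite, and by the analogous exchange argument $\rho_0$ is the strict global minimum $J_{\min}=\sum_i r_i r_{n+1-i}$; for any other $\pi$ neither ordering coincides, so $Q$ has factors of both signs, $\rho_0$ is a saddle, and the same swap-of-two-values argument shows its critical value lies strictly between $J_{\min}$ and $J_{\max}$. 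The main obstacle is the Hessian computation itself — carrying out the root-space block diagonalization cleanly and verifying non-degeneracy block by block, which is exactly where both genericity hypotheses (on $\rho_d$ and, through isospectrality, on $\rho_0$) enter — together with the routine but necessary bookkeeping identifying $Q|_\T$ with the intrinsic Hessian on the quotient manifold $\M$.
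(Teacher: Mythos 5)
Your proposal is correct and follows essentially the same route as the paper's proof: identifying the $n!$ commuting diagonal critical points, expanding $J$ in the chart $U=e^{X}U_0$, block-diagonalizing the Hessian over the root spaces $\Span\{\lambda_{k\ell},\bar\lambda_{k\ell}\}$ (with the Cartan directions discarded as fibre directions), and classifying maximum, minimum and saddles by a swap/rearrangement argument. Your explicit block formula $-(a^2+b^2)(r_{\pi(k)}-r_{\pi(\ell)})(r_k-r_\ell)$ and the identity $J_{\max}-J(\pi)=\tfrac12\sum_i(r_i-r_{\pi(i)})^2$ merely make the paper's swap argument quantitative; no new idea or gap is involved.
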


\begin{IEEEproof}
See \cite{stability} for the definition of sink, source and saddles.
For a given $\rho_d$, $\rho_0$ is a critical point of $J(\rho)$ if and
only if $[\rho_0,\rho_d]=0$, i.e. there exists a basis such that
\begin{align*}
  \rho_d &=\diag(w_1,\ldots,w_n), \\
 \rho_0 &=\diag(w_{\tau(1)},\ldots,w_{\tau(n)}),
\end{align*}
where $\tau$ is a permutation of the numbers $\{1,\ldots,n\}$. Since
$\rho_d$ is generic and hence the $w_k$ are distinct, there are $n!$
distinct permutations and thus $n!$ critical points with critical
values $J(\rho_0)=\sum_{k=1}^n w_k w_{\tau(k)}$.

Next, in order to calculate the Hessian matrix, we need to find a
parameterization of the points near $\rho_0$. Recalling that for any
$\rho\in\M$ we have $\rho=U\rho_d U^\dag$ for some $U\in\SU(n)$,
consider $J$ as a function on $\SU(n)$ with
$J(U)=\Tr(U\rho_dU^\dagger\rho_d)$.  Let $U_0$ be a critical point
of $J(U)$ and $\rho_0=U_0\rho_d U_0^\dag$.  Any $U$ in the
neighborhood of $U_0$ can be parameterized as $U=e^{\xs}U_0$, where
$\vec{x}\in\RR^{n^2-1}$ and $\vec{\sigma}$ is the orthonormal basis
for $\su(n)$ defined in appendix \ref{app:lie_algebra_basis}.  
Substituting this into $J$ gives
\begin{align*}
J(\vec{x}) 
 =& \Tr(e^{\xs}U_0\,\rho_d\,U_0^\dagger e^{-\xs}\rho_d)\\
 =& \Tr[(\ONE+\xs+\mbox{$\frac{1}{2}$}(\xs)^2) \, U_0\rho_d U_0^\dagger\times\\
  & \qquad (\ONE-\xs+\mbox{$\frac{1}{2}$}(\xs)^2)\rho_d]+\Theta(|\vec x|^3)\\
 =& \Tr[U_0\,\rho_d\, U_0^\dagger\, \rho_d]
    +\Tr[(\xs)^2 \rho_0 \rho_d]\\
  & -\Tr[(\xs)\rho_0(\xs)\rho_d]
    +\Theta(|\vec{x}|^3),
\end{align*}
where we have used $\rho_0 \rho_d=\rho_d \rho_0$.  Taking the basis to
be $\vec{\sigma}=\{\lambda_{k}, \lambda_{k\ell}, \bar\lambda_{k\ell}\}$
with $\lambda_k$, $\lambda_{k\ell}$ and $\bar\lambda_{k,\ell}$ as in
appendix \ref{app:lie_algebra_basis}, we find that the Hessian matrix
$\frac{\partial^2 J}{\partial{x_j}\partial{x_j}}$ at $\rho_0$ is
diagonal, i.e., the basis vectors are eigenvectors.  The first $n-1$
diagonal entries corresponding to $\lambda_k$ vanish but as we are only
interested in the tangent space to the manifold spanned by
$\{\lambda_{k\ell},\bar{\lambda}_{k\ell}\}$, we can restrict our
attention to this subspace.  On this subspace, i.e., for
$\sigma_j=\lambda_{k\ell}$ or $\bar\lambda_{k\ell}$, we have
\begin{align*}
\frac{\partial^2 J}{\partial^2{x_j}}
=2\Tr[\sigma_j^2\rho_0\rho]-2\Tr[\sigma_j\rho_0\sigma_j\rho].
\end{align*}
The action of $\sigma_j=\lambda_{k\ell}$ or
$\sigma_j=\bar\lambda_{k\ell}$ is restricted to the subspace spanned by
the basis vectors $e_k$ and $e_{\ell}$.  On this subspace
$\lambda_{k\ell}^2$ is identity operator and the conjugate action of
$\sigma_j$ on the diagonal matrix $\rho_0$ swaps its $k$-th and
$\ell$-th diagonal entries. Since $\rho_0$ is non-degenerate, any swap
$\lambda_{k\ell}$ or $\bar\lambda_{k\ell}$ will make $\frac{\partial^2
J}{\partial^2{x_j}}$ either larger or smaller than zero. Thus the
Hessian matrix at $\rho_0$ is diagonal with $n^2-n$ non-zero diagonal
entries, corresponding to $n^2-n$ independent directions in the tangent
space of $\M$. Therefore, all $n!$ critical points $\rho_0$ are
hyperbolic, and $J$ is a Morse function.  The maximal critical value
occurs only when $\rho_0=\rho_d$ and the minimal value occurs only when
$w_{\tau(k)}$'s are in an increasing order. For all other critical
values, there always exists a swap that will increase the value of $J$
and one that will decrease it, showing that they are saddle points.
\end{IEEEproof}

\section{Effectiveness of Lyapunov control for ideal systems}
\label{sec:conv_ideal}

When the hyperbolic critical points of the Lyapunov function are
also the stationary points of the dynamics, there are restrictions
on the possible dynamics near those critical points.  In particular,
if the dynamics is the gradient flow of the the Lyapunov function
then there is a simple correspondence between the number of negative
(positive) eigenvalues at the critical point and the dimension of
the stable (unstable) manifold at the critical point as a stationary
solution.  However, in general, this does not hold for a dynamical
system other than the gradient flow.  To be a gradient flow, the
coefficient matrix of the linearized system has to be symmetric and
we will see that (\ref{eqn:auto}) is not the gradient flow of any
function.  Therefore, in order to investigate the stability and to
calculate the dimension of the stable manifold at any stationary
point, we have to resort to the definition of the stable manifold,
and investigate the linearized dynamics.

Throughout this section we shall assume that the Hamiltonian is
\emph{ideal}.  Without loss of generality we further assume that $H_0$
has zero trace, as the identity part of $H_0$ only changes the global
phase.  Once the Hamiltonian is chosen, the LaSalle invariant set $E$
depends only on the target state $\rho_d$.  As stated before, throughout
this paper we focus on generic states, i.e., assuming $\rho_d$ has $n$
distinct eigenvalues, and assume $\rho(0)$ and $\rho_d(0)$ have the same
spectrum.  Similar tools can be applied to non-generic states but they
must be separately investigated as the topology of the critical points
for non-generic states is different.

\subsection{Stationary (generic) target state}

We work in a basis where $H_0$ is diagonal.  If $\rho_d$ is stationary,
i.e., $[H_0,\rho_d]=0$, then it is also diagonal, and (\ref{eqn:auto})
reduces to a dynamical system on $\M$:
\begin{subequations}
\label{eqn:auto1}
\begin{align}
\dot \rho(t) &=-i [ H_0+f(\rho)H_1, \rho(t) ]\\
f(\rho)&=\Tr([-iH_1,\rho(t)]\rho_d)
\end{align}
\end{subequations}
with the corresponding LaSalle invariant set
\begin{align}
  E &=\{\rho_0: \dot V(\rho(t))=0,\rho(0)=\rho_0\} \nonumber\\
    &=\{\rho_0: [\rho_0,\rho_d]=\diag(c_1,\ldots,c_n)\}
\end{align}
Let $\rho_d=\diag(w_1,\ldots,w_n)$, with $w_k\neq w_\ell$ for
$k\ne\ell$. For any $\rho\in E$, $[\rho_d,\rho]$ is diagonal if and
only if $\rho$ is diagonal, with diagonal elements as a permutation
of $(w_1,\ldots,w_n)$. According to previous section, these $n!$
stationary points are also the hyperbolic critical points of the
Lyapunov function $V(\rho)$.

\begin{theorem}
\label{thm:generic:crit} If $\rho_d$ is a generic stationary state then
the invariant set contains exactly the $n!$ critical points of the
Lyapunov function $\rho_d^{(k)}$, $k=1,\ldots,n!$, that commute with
$\rho_d$ and have the same spectrum.
\end{theorem}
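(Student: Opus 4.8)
The plan is to read off the invariant set from the characterization already obtained for ideal Hamiltonians and then use genericity of $\rho_d$ to pin it down completely. First I would observe that when $\rho_d$ is stationary, $[H_0,\rho_d]=0$, so in the basis in which $H_0$ is diagonal $\rho_d$ is diagonal as well; write $\rho_d=\diag(w_1,\ldots,w_n)$ with the $w_k$ pairwise distinct. Since the second component of the extended system is then frozen, the dynamics collapses to (\ref{eqn:auto1}) on a single copy of $\M$, and Theorem~\ref{thm:lasalle:4} (as specialized in this subsection) gives $E=\{\rho\in\M:[\rho,\rho_d]=\diag(c_1,\ldots,c_n)\}$.

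Next I would compute the commutator entrywise. Because $\rho_d$ is diagonal, $([\rho,\rho_d])_{k\ell}=(w_\ell-w_k)\rho_{k\ell}$, so the diagonal entries vanish automatically and the off-diagonal ones are $(w_\ell-w_k)\rho_{k\ell}$. Demanding that $[\rho,\rho_d]$ be diagonal therefore forces $(w_\ell-w_k)\rho_{k\ell}=0$ for all $k\ne\ell$; since $w_k\ne w_\ell$ for $k\ne\ell$ by genericity, this gives $\rho_{k\ell}=0$ for $k\ne\ell$, i.e. $\rho$ is diagonal (and in fact $[\rho,\rho_d]=0$, so all $c_k=0$). Conversely, any diagonal $\rho$ obviously satisfies $[\rho,\rho_d]=0\in E$. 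As $\rho\in\M$ is isospectral with $\rho_d$, its diagonal must be a permutation $(w_{\tau(1)},\ldots,w_{\tau(n)})$ of $(w_1,\ldots,w_n)$, and because the $w_k$ are distinct there are exactly $n!$ such matrices; these are the points $\rho_d^{(k)}$ in the statement.

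Finally I would close the loop with the critical-point analysis of the previous section: by Theorem~\ref{thm:crit:1} every critical point of $V$ lies in $E$, and by Theorem~\ref{thm:crit:2} (applied to $J(\rho)=\Tr(\rho\rho_d)$ on $\M$) the critical points of $V$ are precisely the $\rho$ with $[\rho,\rho_d]=0$, hence exactly the $n!$ diagonal permutations just identified, and they are hyperbolic. Therefore $E$ coincides with this set of $n!$ critical points of the Lyapunov function, which is the claim. There is no substantive obstacle here: the only point needing a word of care is the reduction of the extended-space description of $E$ to the single-copy condition $[\rho,\rho_d]=\diag(c_1,\ldots,c_n)$ when $\rho_d$ is held stationary, after which the argument is just the elementary fact that a matrix commutes with a generic diagonal matrix only if it is itself diagonal, combined with the isospectrality constraint.
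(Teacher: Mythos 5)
Your proposal is correct and follows essentially the same route as the paper, which proves this result inline in the text preceding the theorem: reduce to the single-copy system with $E=\{\rho:[\rho,\rho_d]=\diag(c_1,\ldots,c_n)\}$, note that for diagonal $\rho_d$ with distinct eigenvalues this forces $\rho$ diagonal (hence $[\rho,\rho_d]=0$), count the $n!$ isospectral diagonal matrices, and identify them with the critical points of $V$ via the results of the previous section. Your explicit entrywise computation $([\rho,\rho_d])_{k\ell}=(w_\ell-w_k)\rho_{k\ell}$ simply spells out the step the paper states without detail, so there is nothing to add.
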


These $n!$ points are the only stationary solutions and all the
other solutions must converge to \emph{one} of these points. Through
analyzing the sign of the eigenvalues of the coefficient matrix of
the linearized system, we shall see that $\rho_d$ is asymptotically
stable, and all other stationary points are unstable. In order to
achieve this, we require a real representation for
(\ref{eqn:auto1}). A natural choice is the Bloch representation. Let
$\{\xi_k\}_{k=1}^{n^2}$ be an orthonormal basis for all $n\times n$
Hermitian matrices, with $\xi_{n^2}=\frac{1}{\sqrt{n}}I$. We have
$\rho=\sum_k s_k\xi_k$, with $s_k=\Tr(\rho \xi_k)$. Since the
dynamics is trace-preserving, i.e. $s_{n^2}=\frac{1}{\sqrt{n}}$ is
constant, we can further reduce the dynamics onto the subspace
$\RR^{n^2-1}$, and $\rho$ can be represented as a vector
$\vec{s}\in\RR^{n^2-1}$. Accordingly, the quantum dynamical
system~(\ref{eqn:auto}) can be represented as
\begin{subequations}
\begin{align*}
\dot {\vec{s}}(t)   &= (A_0+f(\vec{s},\vec{s}_d)A_1)\vec{s}(t)\\
\dot {\vec{s}}_d(t) &= A_0\vec{s}_d(t)\\
f(\vec{s},\vec{s}_d)&= \vec{s_d}^TA_1\vec{s},
\end{align*}
\end{subequations}
where $A_0$ and $A_1$ are two anti-symmetric matrices:
\begin{align}
A_0 (m,n)  &= \Tr(iH_0[\xi_m,\xi_n])\\
A_1 (m,n)  &= \Tr(iH_1[\xi_m,\xi_n])
\end{align}
When $\rho_d$ is stationary, this system reduces to
\begin{subequations}
\label{eq:sys_real}
\begin{align}
\dot {\vec{s}}(t) &= (A_0+f(\vec{s})A_1)\vec{s}(t)\\
        f(\vec{s})&= \vec{s_d}^TA_1\vec{s},
\end{align}
\end{subequations}
and the Lyapunov function (\ref{eqn:4}) is represented as
$V(\vec{s})=\frac{1}{2}||\vec{s}-\vec{s}_d||^2$. According to
Theorem~\ref{thm:generic:crit}, for a generic $\rho_d$,
(\ref{eq:sys_real}) has $n!$ stationary points, denoted as ${\vec
s}^{(k)}$, $k=1,\ldots,n!$. The linearized system near the
stationary state ${\vec s}^{(k)}$ is
\begin{equation}
\label{eqn:linear}
  \dot {\vec{s}}= D_f({\vec s}^{(k)})\cdot (\vec{s}-{\vec s}^{(k)}),
\end{equation}
where $D_f({\vec s}^{(k)})=A_0+\vec{s_d}^TA_1{\vec s}^{(k)}A_1+A_1
\vec{s}^{(k)}\cdot \vec{s_d}^T A_1$ is a linear map defined on
$\RR^{n^2-1}$. $f({\vec s}^{(k)})=0$ gives $\vec{s_d}^TA_1{\vec
s}^{(k)}=0$, and $D_f({\vec s}^{(k)})=A_0+A_1 \vec{s}^{(k)}\cdot
\vec{s_d}^T A_1$. Since $A_0$ and $A_1$ are anti-symmetric,
$D_f({\vec s}^{(k)})$ cannot be a symmetric matrix, and the dynamics
cannot be a gradient flow of any function. Therefore, the topology
near $\vec{s}^{(k)}$ as a critical point of $V$ is not enough to
infer the local dynamics in its vicinity, and we need to actually
calculate the eigenvalues of $D_f(\vec{s}^{(k)})$.

\begin{remark}
The state space $S_\M$ of (\ref{eq:sys_real}) is the set of all Bloch
vectors $\vec{s}\in \RR^{n^2-1}$ corresponding to density operators
$\rho\in\M$.  For generic $\rho_d$ the state manifold $\M$ is a flag
manifold homeomorphic to $\SU(n)/\exp(\C)$, where $\C$ is the Cartan
subspace of the Lie algebra $\su(n)$ and $\exp(\C)$ is its exponential
image in $\SU(n)$, corresponding to diagonal unitary matrices with
determinant $1$.  Hence, the tangent space $T_\M(\rho_0)$ of $\M$ at any
point $\rho_0$ corresponds to the non-Cartan subspace $\T$ of $\su(n)$
and the Cartan elements of $\su(n)$ correspond to the tangent space of
the isotropy subgroup of $\rho_0$.  In the real representation,
$\RR^{n^2-1}$ is therefore the direct sum of the ($n^2-n$)-dimensional
tangent space $S_\T$ to the manifold $S_\M$ and the ($n-1$)-dimensional
subspace $S_\C$ corresponding to the Cartan subspace of $\su(n)$.
\end{remark}

\begin{theorem}
\label{thm:generic:hyperbolic} For a generic stationary target state
$\rho_d$ all the $n!$ stationary states of the dynamical
system~(\ref{eqn:auto1}) are hyperbolic, i.e. all eigenvalues of
$D_f({\vec s}^{(k)})$, restricted on $S_\T$, have nonzero real
parts, for $k=1,\ldots,n!$. Among those stationary states, $\rho_d$
is the only sink, all other points are saddles, except the global
maximum, which is a source.
\end{theorem}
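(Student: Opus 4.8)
The plan is to study the $(n^2-n)\times(n^2-n)$ coefficient matrix $D_f({\vec s}^{(k)})$ restricted to $S_\T$, exploiting the fact that after imposing $f({\vec s}^{(k)})=0$ it is a rank-one perturbation of the skew-symmetric matrix $A_0$. Write $\rho^{(k)}=\diag(w_{\tau_k(1)},\dots,w_{\tau_k(n)})$ and $\rho_d=\diag(w_1,\dots,w_n)$ for the diagonal density operators attached to ${\vec s}^{(k)}$ and $\vec{s}_d$, where $\tau_k$ is a permutation and the $w_m$ are distinct. Setting $\vec u=A_1{\vec s}^{(k)}$ and $\vec w=A_1^{T}\vec{s}_d=-A_1\vec{s}_d$, so that $D_f({\vec s}^{(k)})=A_0+\vec u\vec w^{T}$, I would first note that $\vec u,\vec w\in S_\T$: both correspond to commutators of the off-diagonal part of $H_1$ with a diagonal matrix, hence have no Cartan component, and both $A_0$ and the rank-one term annihilate $S_\C$, so the restriction to $S_\T$ is legitimate. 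Since the Hamiltonian is ideal, $H_0$ is strongly regular, which forces all $a_k$ — and hence all transition frequencies $\omega_{k\ell}$, $1\le k<\ell\le n$ — to be distinct and nonzero; as $H_0$ is diagonal, $A_0$ is block diagonal in the root-space basis $\{\lambda_{k\ell},\bar\lambda_{k\ell}\}$, acting as a planar rotation of frequency $\omega_{k\ell}$ on each two-dimensional block, so $A_0|_{S_\T}$ has the $n^2-n$ distinct, nonzero, purely imaginary eigenvalues $\pm i\omega_{k\ell}$.

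By the matrix-determinant lemma, $\det\bigl(\lambda I-D_f({\vec s}^{(k)})\bigr)\big|_{S_\T}=\bigl(\prod_{k<\ell}(\lambda^2+\omega_{k\ell}^2)\bigr)\,\bigl(1-g(\lambda)\bigr)$ with $g(\lambda)=\vec w^{T}(\lambda I-A_0)^{-1}\vec u$. Since $A_0$ and its resolvent preserve each root space and $\vec u,\vec w$ live entirely in the root spaces, $g$ decomposes as a sum over the $n(n-1)/2$ root spaces; within each, the $\omega$-dependent off-diagonal entries of $(\lambda I-A_0|_{\mathrm{block}})^{-1}$ cancel, leaving $g(\lambda)=\sum_{k<\ell} 2c_{k\ell}\lambda/(\lambda^2+\omega_{k\ell}^2)$. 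A short computation identifies the residues as $c_{k\ell}=\tfrac12\norm{\vec m_{k\ell}}^2\,(w_k-w_\ell)\,(w_{\tau_k(\ell)}-w_{\tau_k(k)})$, where $\vec m_{k\ell}$ is a fixed vector depending only on $b_{k\ell}$. Full connectedness of $H_1$ gives $b_{k\ell}\ne0$, hence $\vec m_{k\ell}\ne0$, and genericity of $\rho_d$ gives $w_\ell\ne w_k$ and $w_{\tau_k(\ell)}\ne w_{\tau_k(k)}$; therefore every $c_{k\ell}$ is real and nonzero, with $\Sgn c_{k\ell}=-\Sgn\bigl[(w_k-w_\ell)(w_{\tau_k(k)}-w_{\tau_k(\ell)})\bigr]$. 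Consequently all $c_{k\ell}<0$ when $\rho^{(k)}=\rho_d$ (that is, $\tau_k=\mathrm{id}$); all $c_{k\ell}>0$ when $\tau_k$ reverses the ordering of the $w_m$, which is exactly the global maximum of $V$ — the unique critical point at which $\Tr(\rho\rho_d)$ is minimal; and the $c_{k\ell}$ take both signs for every other $\tau_k$.

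Finally I would establish hyperbolicity together with the eigenvalue count by the homotopy $D_t=A_0+t\,\vec u\vec w^{T}$, $t\in[0,1]$. For every $t>0$ the factor $1-tg(\lambda)$ cannot vanish on the imaginary axis: at a point $i\nu$ that is not a pole of $g$ we have $g(i\nu)\in i\RR$, so $\Re\bigl(1-tg(i\nu)\bigr)=1\ne0$; and at a pole $i\omega_{pq}$ the product $\bigl(\prod_{k<\ell}(\lambda^2+\omega_{k\ell}^2)\bigr)\bigl(1-tg(\lambda)\bigr)$ takes the value $-tc_{pq}\,(2i\omega_{pq})\prod_{(k,\ell)\neq(p,q)}(\omega_{k\ell}^2-\omega_{pq}^2)$, which is nonzero because $c_{pq}\ne0$ and the $\omega_{k\ell}$ are distinct. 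Hence $D_t$ is hyperbolic for all $t\in(0,1]$, which at $t=1$ proves that every one of the $n!$ stationary states is hyperbolic. Because no eigenvalue touches the imaginary axis on $(0,1]$, the number of eigenvalues in the open right half-plane is constant there; as $t\to0^{+}$ all $n^2-n$ eigenvalues converge to the simple poles $\pm i\omega_{k\ell}$ with first-order motion $\lambda(t)=\pm i\omega_{k\ell}+t\,c_{k\ell}+o(t)$, so each pair $(k,\ell)$ contributes two eigenvalues, both entering the right half-plane if $c_{k\ell}>0$ and the left if $c_{k\ell}<0$. Therefore $D_f({\vec s}^{(k)})|_{S_\T}$ has exactly $2\,\#\{(k,\ell):c_{k\ell}>0\}$ eigenvalues with positive real part: this number is $0$ only for $\rho^{(k)}=\rho_d$, which is thus a sink; it equals $n^2-n$ only for the order-reversing permutation, which is thus a source; and it is strictly between for the remaining $n!-2$ states, each of which therefore has both stable and unstable directions and is a saddle. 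I expect the homotopy step to be the main obstacle — one must rule out an eigenvalue returning to the imaginary axis for some $t\in(0,1]$ — and it is precisely here that both ideality hypotheses are used, through the nonvanishing of the residues $c_{k\ell}$ (full connectedness of $H_1$ and genericity of $\rho_d$) and of the products of distinct $\omega_{k\ell}$'s (strong regularity of $H_0$); a secondary point requiring care is the sign bookkeeping that identifies which permutation yields the sink, which the source, and why the ``global maximum'' in the statement is the global maximum of the Lyapunov function $V$.
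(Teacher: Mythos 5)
Your proposal is correct, and while it shares the paper's starting point---restricting to $S_\T$, writing the linearization as the rank-one update $B_0+t\,\vec u\vec w^T$ of the block-rotation matrix $B_0$, and using the matrix determinant lemma together with the observation that $\vec w^T(\lambda I-B_0)^{-1}\vec u$ is purely imaginary on the imaginary axis---it diverges from the paper in two substantive ways. First, for the resonant case $\lambda=\pm i\omega_{k\ell}$ the paper runs an eigenvector argument (its Lemma on purely imaginary eigenvalues: either the eigenvector of $B$ is an eigenvector of $B_0$, forcing $\vec v^{(k,\ell)}=0$, or $\vec u^{(k,\ell)}=0$, both contradicting ideality and genericity), whereas you evaluate the limiting value of the characteristic polynomial at the pole and use nonvanishing of the residue $c_{pq}$; both encode the same hypotheses and both are sound. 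Second, and more significantly, the paper obtains the sink/source/saddle classification \emph{not} from the linearization but from the Lyapunov function: once hyperbolicity is known, $\rho_d$ must be a sink because $V$ decreases and $V(\rho_d)=0$, the global maximum must be a source, and the remaining points must be saddles since otherwise they would be local extrema of $V$, contradicting the Morse analysis of Theorem~\ref{thm:crit:2}. Your homotopy $D_t=B_0+t\,\vec u\vec w^T$, combined with the fact that no eigenvalue touches the imaginary axis for $t\in(0,1]$ and the first-order motion $\lambda(t)=\pm i\omega_{k\ell}+t\,c_{k\ell}+o(t)$ at the simple eigenvalues of $B_0$, yields the classification directly and in fact gives more: the unstable dimension $2\,\#\{(k,\ell):c_{k\ell}>0\}$, which reproduces the paper's later remark that the stable/unstable dimensions agree with the Morse index of $V$, without invoking that correspondence. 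The price of your route is the extra bookkeeping it demands: the simplicity of the spectrum of $B_0$ (which needs all $\omega_{k\ell}$, $k<\ell$, distinct and of one sign, as strong regularity with the chosen ordering provides) and the explicit sign of the residues $c_{k\ell}\propto(w_k-w_\ell)(w_{\tau_k(\ell)}-w_{\tau_k(k)})$, which you assert from ``a short computation''; that sign should be verified carefully since it decides which permutation is the sink, though it is corroborated by the independent Lyapunov argument that the target state must be the stable one. The paper's route buys a shorter classification step by recycling Theorem~\ref{thm:crit:2}; yours buys a self-contained spectral proof with an explicit eigenvalue count.
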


\begin{IEEEproof}
Let $\vec{s}_0$ be one of the $n!$ stationary states. We first show
that $D_f(\vec{s}_0)$ vanishes on the $(n-1)$-dimensional subspace
$S_\C$, which is orthogonal to $S_\T$. In the second step we show
that $D_f(\vec{s}_0)$ is invariant on $S_\T$ and has $n^2-n$
non-zero eigenvalues. Finally, we show that the restriction of
$D_f(\vec{s}_0)$ onto $S_\T$ does not have any purely imaginary
eigenvalues, from which it follows that $\vec{s}_0$ is a hyperbolic
stationary state, and the local dynamics of (\ref{eqn:auto1}) near
every stationary state can therefore be approximated by the
linearized system~\cite{stability}.

\begin{lemma}
$D_f(\vec{s}_0)$ vanishes on the subspace $S_\C$.
\end{lemma}

This lemma shows that $\vec{s}_0$ is not a hyperbolic fixed point of
the dynamical system~(\ref{eq:sys_real}) defined on $\RR^{n^2-1}$.
However, we are only interested in the dynamics on the manifold
$S_\M$, and thus it suffices to show that $\vec{s}_0$ is a
hyperbolic fixed point of the restriction of $D_f(\vec{s}_0)$ to the
tangent space $S_\T$ of $S_\M$.

\begin{lemma}
The restriction of $D_f(\vec{s}_0)$ to $S_\T$ is well-defined,
represented by a matrix $B$ with $n^2-n$ non-zero eigenvalues.
\end{lemma}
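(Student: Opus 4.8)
The plan is to exhibit $B$, the restriction of $D_f(\vec{s}_0)=A_0+A_1\vec{s}_0\cdot\vec{s}_d^TA_1$ to $S_\T$, as a rank‑one perturbation of the \emph{invertible} matrix $A_0|_{S_\T}$, and then to show that this perturbation cannot produce a zero eigenvalue. Here $\vec{s}_0$ denotes the stationary state under consideration; we use $f(\vec{s}_0)=0$, so $D_f(\vec{s}_0)$ is $A_0$ plus the rank‑one operator $\vec{\eta}\mapsto(\vec{s}_d^TA_1\vec{\eta})\,A_1\vec{s}_0$.

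First I would check that $B$ is a genuine endomorphism of $S_\T$. In the Bloch picture $A_0$ implements the commutator action $X\mapsto[-iH_0,X]$; since $H_0$ is diagonal, $A_0$ respects the orthogonal splitting $\RR^{n^2-1}=S_\C\oplus S_\T$, vanishes on $S_\C$, and on $S_\T$ is block diagonal over the $n(n-1)/2$ root subspaces $V_{k\ell}=\Span\{\lambda_{k\ell},\bar\lambda_{k\ell}\}$, acting on each $V_{k\ell}$ as an antisymmetric $2\times2$ block with eigenvalues $\pm i\omega_{k\ell}$ — this is exactly the content of the commutation relations behind (\ref{eq:Bm}). Since the Hamiltonian is ideal, every $\omega_{k\ell}\neq0$, so $A_0|_{S_\T}$ is invertible with $\det(A_0|_{S_\T})=\prod_{k<\ell}\omega_{k\ell}^2\neq0$. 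The rank‑one term has range $\Span\{A_1\vec{s}_0\}$; since $\rho_0=U_0\rho_dU_0^\dagger$ is diagonal, $-i[H_1,\rho_0]$ has vanishing diagonal and hence lies in $\T$, so $A_1\vec{s}_0\in S_\T$. (Dually, the covector $\vec{s}_d^TA_1$ annihilates $S_\C$, consistently with the previous lemma.) Hence $D_f(\vec{s}_0)$ leaves $S_\T$ invariant and $B$ is well defined.

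Next I would show $\ker B=\{0\}$. If $B\vec{\eta}=0$ then $A_0\vec{\eta}=-c\,A_1\vec{s}_0$ with $c:=\vec{s}_d^TA_1\vec{\eta}$, hence $\vec{\eta}=-c\,(A_0|_{S_\T})^{-1}A_1\vec{s}_0$; feeding this back gives $c\bigl(1+\vec{s}_d^TA_1(A_0|_{S_\T})^{-1}A_1\vec{s}_0\bigr)=0$, so it suffices to prove the scalar $Q:=\vec{s}_d^TA_1(A_0|_{S_\T})^{-1}A_1\vec{s}_0$ vanishes, as then $c=0$ and $\vec{\eta}=0$. By antisymmetry of $A_1$, $Q=-\langle A_1\vec{s}_d,(A_0|_{S_\T})^{-1}A_1\vec{s}_0\rangle$. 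The key structural observation is that commuting any diagonal matrix $\diag(d_1,\dots,d_n)$ with $H_1$ merely rescales its $(k,\ell)$ off‑diagonal block by the real factor $d_\ell-d_k$; hence $A_1\vec{s}_0$ and $A_1\vec{s}_d$ each restrict on $V_{k\ell}$ to a real multiple of one and the same vector $\vec{u}_{k\ell}\in V_{k\ell}$ determined by the matrix element $(H_1)_{k\ell}$, say $\nu_{k\ell}\vec{u}_{k\ell}$ and $\mu_{k\ell}\vec{u}_{k\ell}$. Since $(A_0|_{S_\T})^{-1}$ is block diagonal over the $V_{k\ell}$ with each block an antisymmetric $2\times2$ matrix $N_{k\ell}$, the inner product splits as $\sum_{k<\ell}\mu_{k\ell}\nu_{k\ell}\,\langle\vec{u}_{k\ell},N_{k\ell}\vec{u}_{k\ell}\rangle$, and $\langle\vec{u},N\vec{u}\rangle=0$ for every antisymmetric $N$. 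Thus $Q=0$, $\ker B=\{0\}$, and all $n^2-n$ eigenvalues of $B$ are nonzero. (Equivalently, $\det B=\det(A_0|_{S_\T})(1+Q)=\det(A_0|_{S_\T})\neq0$ by the matrix determinant lemma.)

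The main obstacle is the explicit bookkeeping behind the two structural facts used above: (a) that $A_0$ restricted to $S_\T$ is block diagonal with antisymmetric $2\times2$ rotation‑generator blocks of frequency $\omega_{k\ell}$, and (b) that $A_1$ applied to a diagonal density operator has $V_{k\ell}$‑component a real scalar times the fixed vector $\vec{u}_{k\ell}$. Both follow from the root‑space basis of $\su(n)$ in the appendix together with the expansion of $-iH_1$ already used to derive (\ref{eq:Bm}), so the computation is routine, but it must be carried out carefully: the vanishing of the rank‑one correction $Q$ — and hence the invertibility of $B$ — rests entirely on the antisymmetry of the individual $2\times2$ blocks of $(A_0|_{S_\T})^{-1}$.
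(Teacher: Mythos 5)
Your proposal is correct and follows essentially the same route as the paper: you identify $B$ as the rank-one perturbation $B_0\pm\vec{u}\vec{v}^T$ of the block-diagonal restriction of $A_0$ (invertible since all $\omega_{k\ell}\neq 0$), and kill the correction term by noting that on each root subspace $A_1\vec{s}_0$ and $A_1\vec{s}_d$ are real multiples of the same vector while the blocks of $B_0^{-1}$ are antisymmetric, which is exactly the paper's matrix-determinant-lemma computation (your kernel argument is just an equivalent phrasing, as you note).
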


\begin{lemma}
\label{lemma:generic1:3} If $i\beta$ is a purely imaginary
eigenvalue of $B$ then it must be an eigenvalue of $B_0$, i.e.,
$i\beta=\pm i\omega_{k\ell}$ for some $(k,\ell)$, and either the
associated eigenvector $\vec{e}$ must be an eigenvector of $B_0$
with the same eigenvalue, or the restriction of $A_1\vec{s}_0$ to
the $(k,\ell)$ subspace must vanish.
\end{lemma}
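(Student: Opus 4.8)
The plan is to treat the restricted linearization as a rank-one perturbation $B = B_0 + \vec u\,\vec v^{T}$ of its free part $B_0 := A_0|_{S_\T}$, where $\vec u := A_1\vec s_0|_{S_\T}$ (with $\vec s_0$ the Bloch vector of the stationary state under consideration) and $\vec v^{T} := \vec s_d^{T}A_1|_{S_\T}$, so that $\vec v = -A_1\vec s_d$ by antisymmetry of $A_1$, both vectors lying in $S_\T$. Before touching eigenvalues I would record the block structure available in the basis where $H_0$ is diagonal. In the orthonormal basis $\{\lambda_{k\ell},\bar\lambda_{k\ell}\}$ of $\T$, the operator $B_0$ is block diagonal, its $(k,\ell)$-block being $\omega_{k\ell}J_{k\ell}$ for a fixed real orthogonal antisymmetric $2\times2$ matrix $J_{k\ell}$; hence $\spec(B_0)=\{\pm i\omega_{k\ell}:1\le k<\ell\le n\}$, with the $\omega_{k\ell}^{2}$ pairwise distinct by strong regularity. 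Since $H_0$, $\rho_0$ and $\rho_d$ are simultaneously diagonal, $A_1\vec s_0$ and $A_1\vec s_d$ correspond to the strictly off-diagonal matrices $[-iH_1,\rho_0]$ and $[-iH_1,\rho_d]$, whose $(k,\ell)$-blocks are the real multiples $(\rho_0)_{\ell\ell}-(\rho_0)_{kk}$ and $-\bigl((\rho_d)_{\ell\ell}-(\rho_d)_{kk}\bigr)$ of one and the same real vector $\vec n_{k\ell}$ (the image of $(-iH_1)_{k\ell}$), nonzero because $H_1$ is fully connected. The fact I will lean on is that on each block $\vec u_{k\ell}$ and $\vec v_{k\ell}$ are real and parallel.

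Next I would write the eigenvector equation: if $B\vec e = i\beta\vec e$ with $\beta\in\RR$ and $\vec e\ne0$, then $(B_0-i\beta\ONE)\vec e = -(\vec v^{T}\vec e)\,\vec u$. If $\vec v^{T}\vec e=0$ this collapses to $(B_0-i\beta\ONE)\vec e=0$, so $i\beta\in\spec(B_0)$, i.e.\ $i\beta=\pm i\omega_{k\ell}$, and $\vec e$ is a genuine eigenvector of $B_0$ with that eigenvalue — this is the first alternative and we are done. So the real work is the case $\vec v^{T}\vec e\ne0$, which I would handle in two sub-steps: (a) show $i\beta\in\spec(B_0)$, so $i\beta=\pm i\omega_{k\ell}$ for a unique $(k,\ell)$; (b) show that this forces $\vec u_{k\ell}=0$.

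For (a) I argue by contradiction: if $i\beta\notin\spec(B_0)$ then $B_0-i\beta\ONE$ is invertible, and solving for $\vec e$ and applying $\vec v^{T}$ gives the scalar condition $1+\vec v^{T}(B_0-i\beta\ONE)^{-1}\vec u=0$. Evaluating the quadratic form block by block with $(\omega_{k\ell}J_{k\ell}-i\beta\ONE)^{-1}=(\beta^{2}-\omega_{k\ell}^{2})^{-1}(\omega_{k\ell}J_{k\ell}+i\beta\ONE)$, the term $\vec v_{k\ell}^{T}J_{k\ell}\vec u_{k\ell}$ vanishes because $\vec u_{k\ell}\parallel\vec v_{k\ell}$ and $J_{k\ell}$ is antisymmetric, leaving
\[
\vec v^{T}(B_0-i\beta\ONE)^{-1}\vec u \;=\; i\beta\sum_{1\le k<\ell\le n}\frac{\vec v_{k\ell}^{T}\,\vec u_{k\ell}}{\beta^{2}-\omega_{k\ell}^{2}},
\]
which is purely imaginary since $\beta$, the $\omega_{k\ell}^{2}$ and all $\vec v_{k\ell}^{T}\vec u_{k\ell}$ are real; hence the left side has real part $1\ne0$ and cannot vanish — contradiction. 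Thus $i\beta=\pm i\omega_{k\ell}$ for a unique block. For (b) I project the eigenvector equation onto that now-singular block: $(\omega_{k\ell}J_{k\ell}-i\beta\ONE)\vec e_{k\ell}=-(\vec v^{T}\vec e)\vec u_{k\ell}$. The operator on the left has one-dimensional range, namely a complex eigenline of $J_{k\ell}$, which (as the eigenline of a nonzero real antisymmetric matrix for a non-real eigenvalue) contains no nonzero real vector; the right side lies in $\CC\,\vec u_{k\ell}$ with $\vec u_{k\ell}$ real and, since $\vec v^{T}\vec e\ne0$, is a nonzero multiple of it unless $\vec u_{k\ell}=0$. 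Solvability therefore forces $\vec u_{k\ell}=0$, i.e.\ $A_1\vec s_0$ restricted to the $(k,\ell)$-subspace vanishes — the second alternative. I would also remark that for generic $\rho_0$ the $(\rho_0)_{kk}$ are distinct, so $\vec u_{k\ell}\ne0$ on every block and this case is in fact vacuous: $\vec v^{T}\vec e=0$ necessarily and $\vec e$ is a true $B_0$-eigenvector, which is what the next lemma exploits.

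The main obstacle I anticipate is not the resolvent computation but pinning the block structure down precisely — above all the parallelism of $\vec u_{k\ell}$ and $\vec v_{k\ell}$ on every block. That parallelism is exactly what kills the cross term $\vec v_{k\ell}^{T}J_{k\ell}\vec u_{k\ell}$ and makes $\vec v^{T}(B_0-i\beta\ONE)^{-1}\vec u$ purely imaginary; without it the real part of $1+\vec v^{T}(B_0-i\beta\ONE)^{-1}\vec u$ need not be $1$ and the argument for (a) fails. Establishing it amounts to unwinding $[-iH_1,\rho_0]$ and $[-iH_1,\rho_d]$ in the $\lambda$-basis of the appendix and using that $\rho_0$ and $\rho_d$ are diagonal with the same spectrum.
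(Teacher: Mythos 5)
Your proposal is correct, and for the main step it follows the paper's route: you write $B$ as a rank-one perturbation of the block-diagonal $B_0$ (whose $(k,\ell)$ blocks are $\omega_{k\ell}$ times a $2\times 2$ antisymmetric matrix), and for $i\beta\notin\spec(B_0)$ you reduce to the scalar resolvent condition and show $\vec v^T(B_0-i\beta\ONE)^{-1}\vec u$ is purely imaginary because on each block $\vec u^{(k,\ell)}$ and $\vec v^{(k,\ell)}$ are real and parallel (both proportional to $(\Im b_{k\ell},\Re b_{k\ell})^T$); the paper gets the same scalar condition via the matrix determinant lemma and the same blockwise computation, which yields the term $-i\beta|b_{k\ell}|^2$ per block. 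Where you genuinely diverge is the resonant case $i\beta=\pm i\omega_{k\ell}$: the paper splits the eigenvector as $\vec e=\vec x+i\vec y$, multiplies one real equation by $-\omega_{12}B_0^{-1}$, exploits the cancellation $-\omega_{12}^2[B_0^{(1,2)}]^{-1}=B_0^{(1,2)}$ on the resonant block, and deduces componentwise that $u_1=-c^2u_1$, $u_2=-c^2u_2$, forcing $\vec u^{(k,\ell)}=0$ unless $\vec v^T\vec x=\vec v^T\vec y=0$, in which case $\vec e$ is a $B_0$-eigenvector. You instead project the complex eigenvector equation onto the now-singular block and observe that the range of $\omega_{k\ell}J_{k\ell}-i\beta\ONE$ is the complex eigenline for the opposite eigenvalue, which contains no nonzero real vector, while the right-hand side is a complex multiple of the real vector $\vec u^{(k,\ell)}$; solvability then forces $\vec u^{(k,\ell)}=0$ when $\vec v^T\vec e\neq 0$. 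This is a cleaner, more conceptual way to execute the same dichotomy (note $\vec v^T\vec e=0$ is exactly the paper's case $\vec v^T\vec x=\vec v^T\vec y=0$), and your closing remark that for generic $\rho_0$ the second alternative is vacuous is consistent with how the paper uses the lemma afterwards; both arguments rest on the same two structural facts you identified, the block structure of $B_0$ and the blockwise parallelism of $\vec u$ and $\vec v$.
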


Lemma~\ref{lemma:generic1:3} shows that $B$ can have a purely imaginary
eigenvalue $i\beta$ only if $i\beta=\pm i\omega_{k\ell}$ for some
$(k,\ell)$, and either $\vec{u}^{(k,\ell)}=\vec{0}$, i.e., the
projection of $A_1\vec{s}_0$ onto the $(k,\ell)$ subspace vanishes, or
the associated eigenvector is also an eigenvector of $B_0$.  In the
first case this means that $A_1\vec{s}_0$ vanishes on the subspace
$\T_{k\ell}$, or equivalently that $[-iH_1,\rho_0]$ has no support in
$\T_{k\ell}$, which contradicts the assumption that $H_1$ is fully
connected and $\rho_0$ has non-degenerate eigenvalues.  On the other
hand, if $\vec{e}$ is an eigenvector of $B_0$ with eigenvalue
$i\beta=\pm i\omega_{k\ell}$ and $H_0$ is strongly regular then the
projection of $\vec{e}$ onto the $(k,\ell)$ subspace is proportional to
$(1,\pm i)$ and $\vec{e}$ is zero elsewhere, and thus
$\vec{v}^T\vec{e}=0$ implies $\vec{v}^{(k,\ell)}=0$, which contradicts
the fact that the projection $A_1\vec{s}_d$ or $[-iH_1,\rho_d]$ onto the
$(k,\ell)$ subspace must not vanish if $H_1$ is fully connected and
$\rho_d$ has non-degenerate eigenvalues.  Thus we can conclude that if
$H_0$ is strongly regular, $H_1$ fully connected and $\rho_d$ has
non-degenerate eigenvalues, $D_f(\vec{s}_0)$ cannot have purely
imaginary eigenvalues, and thus $\vec{s}_0$ is hyperbolic.
\end{IEEEproof}

This theorem illustrates that the $n!$ critical points $\rho^{(k)}$ of
$V$ are also the $n!$ hyperbolic stationary states of (\ref{eqn:auto1}).
Since $V(\rho_d)=0$, $\rho=\rho_d$ must be a dynamical sink, with all
eigenvalues of $D_f(\vec{s}_d)$ having negative real parts. Similarly,
$\rho=\rho^{(n)}$ with $V(\rho^{(n)})=V_{\rm max}$ must be a dynamical
source. All the other $\rho^{(k)}$ with $0<V(\rho^{(k)})<V_{\rm max}$
must be saddles, with eigenvalues of $D_f(\vec{s}^{(k)})$ having both
negative and positive real parts, for otherwise $\rho^{(k)}$ would be a
sink or source, and thus a local minimum or maximum of $V$, which would
contradict Theorem \ref{thm:crit:2}. Moreover, the dimension of the
stable (unstable) manifold at $\rho^{(k)}$ must agree with the the index
number of $V(\rho^{(k)})$ at $\rho^{(k)}$.  This is a very useful
observation as it allows us to infer that the dimension of the stable
(unstable) manifold at $\rho^{(k)}$ is independent of the specific value
of $\rho_d$, only dependent on the relative location of $\rho^{(k)}$ as
a critical point of $V$ and the system dimension $n$.  The theorem also
shows that each of the $n!-2$ saddle points, $\rho^{(k)}$ has a stable
manifold.  Solutions on the stable manifold converge to $\rho^{(k)}$ and
thus the saddles are not repulsive, as asserted in \cite{altafini2}, and
we can in construct counter-examples to Theorem~1 in \cite{altafini2}.

For example, consider a three-level system with $H_0$ strongly regular
and $H_1$ off-diagonal and fully connected.  For a generic stationary
target state such as $\rho_d=\frac{1}{6}\diag(3,2,1)$, the LaSalle
invariant set consists of $3!=6$ stationary states---$\rho^{(1)}=\rho_d$
and five other $\rho^{(k)}$ referred to as the antipodal points
in~\cite{altafini2}.  The coefficient matrix $D_f(\rho^{(k)})$ of the
linearized system has eigenvalues with negative real parts for every
$\rho^{(k)}$ except the global maximum
$\rho^{(6)}=\frac{1}{6}\diag(1,2,3)$, and thus four of the antipodal
points have stable manifolds and solutions converging to them.  An even
easier way to see that these points cannot all be repulsive is to note
that if, e.g., $\rho_0=\frac{1}{6}\diag(2,3,1)$ was repulsive then we
would have $V(\rho(t))\le V(\rho_0)$ for all $\rho(t)$ in a neighborhood
of $\rho_0$, and thus $\rho_0$ would be a local maximum of the Lyapunov
$V(\rho)=\frac{1}{2}\Tr(\rho-\rho_d)^2$, contradicting the fact that it
is a saddle point of $V$.  On the other hand, we note that any state
$\rho(t)$ starting outside the invariant set $E$ has at least one
off-diagonal component, and as $H_1$ is fully connected and $\rho_d$
non-degenerate, the off-diagonal components of $[-iH_1,\rho_d]$ are all
nonzero.  Thus, the trajectories converging to the saddle points satisfy
conditions (1) and (2) of Theorem~1 in~\cite{altafini2}.  Condition~(3)
is also satisfied as $\mbox{\rm Card}\F_t([-iH_1,\rho_d])=3= \dim\M/2$,
where $\dim\M=3^2-3$.  Thus by Theorem~1 in~\cite{altafini2} they
should converge to $\rho_d$, which is not the case.

Nonetheless, the stable manifolds of the unstable stationary states are
not a serious obstruction to convergence.  In fact, since all solutions
not converging to $\rho_d$ are located on the union of the $n!-2$ stable
manifolds of dimension $<n^2-n$, which form a measure-zero set in the
state space, we can conclude that almost all initial states converge to
$\rho_d$, i.e., $\rho_d$ can be considered almost globally
asymptotically stable, and the Lyapunov design effective in this case.
This is illustrated in Fig.~\ref{fig1}(a) which shows that for a
stationary generic $\rho_d$, all simulated non-stationary trajectories
with random $\rho(0)$ converge to $\rho_d$ exponentially.

\begin{figure*}
\caption{Time evolution of $V(\rho(t),\rho_d(t))$ with $y$-axis in
logarithmic scale.  Each graph shows $V(\rho(t),\rho_d(t))$ for $N=50$
different initial states $\rho(0)$.  The graphs represent four different
types of generic $\rho_d$, of which (a,b,d) are for ideal Hamiltonian,
and (c) for non-ideal Hamiltonian. (a) shows that for stationary
$\rho_d$, all trajectories converge exponentially to the target state to
within machine precision.  The negative slopes in (b) suggest that for a
non-stationary target state with regular $E$, all simulated trajectories
still converge to the target trajectory albeit at a slower rate compared
to (a).  For a non-stationary $\rho_d$ with irregular $E$ as in (d), or
a stationary $\rho_d$ with $H_1$ not fully connected as in (c), on the
other hand, the slopes of $V(\rho(t),\rho_d(t))$ in the log-plot vanish
at different finite distances from the target state for all simulated
trajectories, indicating convergence to states or trajectories at
various non-zero distances from the target.}  \label{fig1}
\subfloat[Stationary target state]%
  {\includegraphics[width=0.49\textwidth]{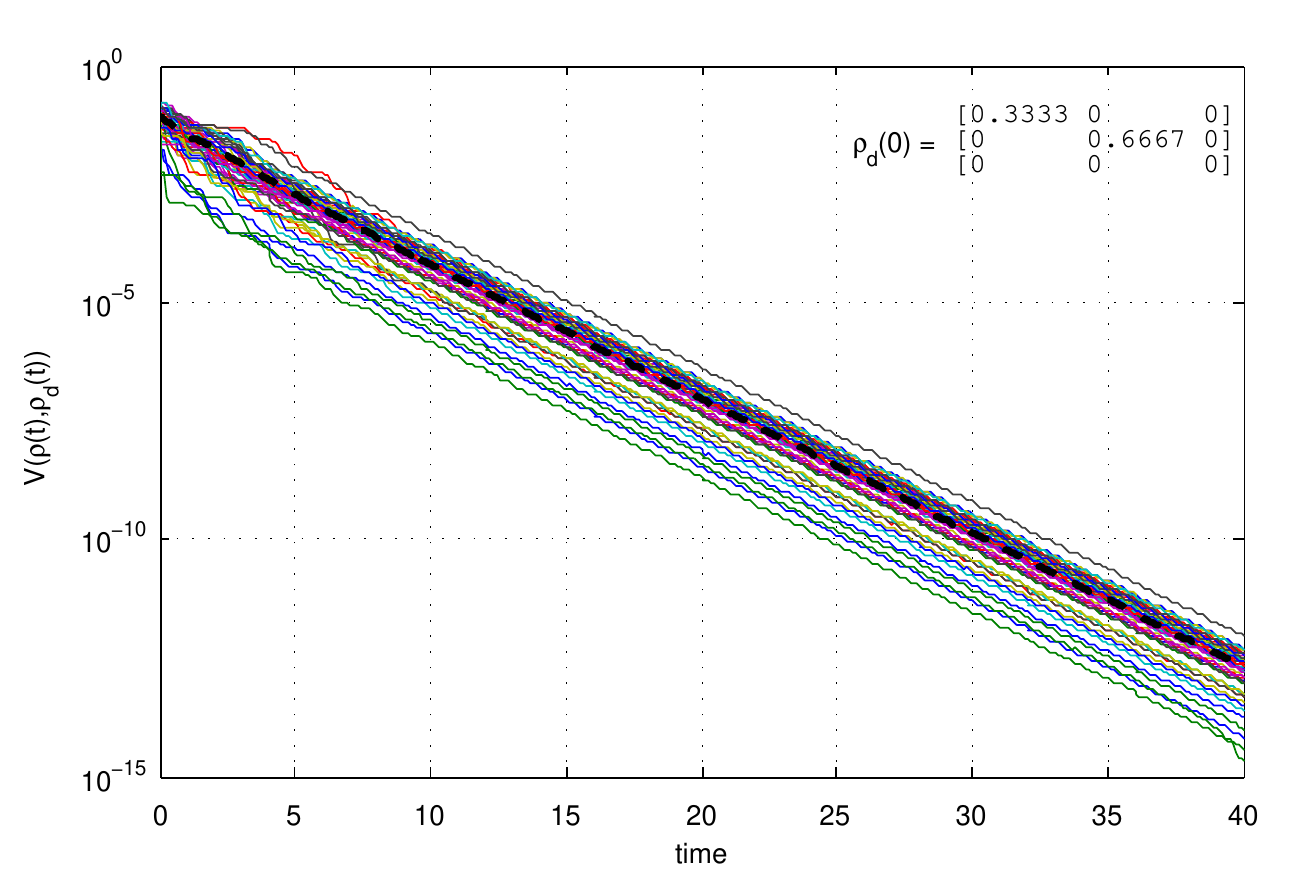}}
\subfloat[Non-stationary target state with regular $E$]%
  {\includegraphics[width=0.49\textwidth]{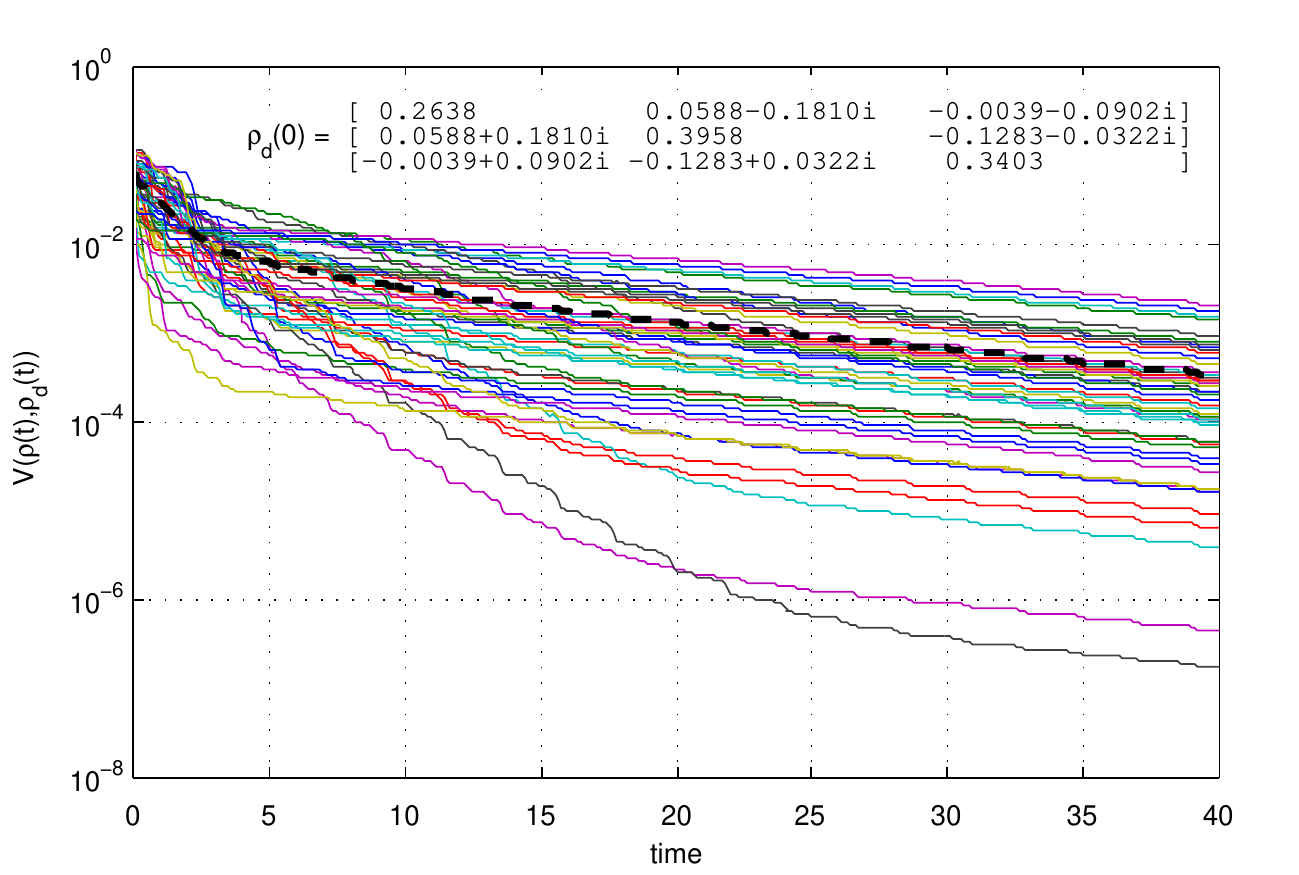}}\\
\subfloat[Stationary target state with $H_1$ not fully connected]%
  {\includegraphics[width=0.49\textwidth]{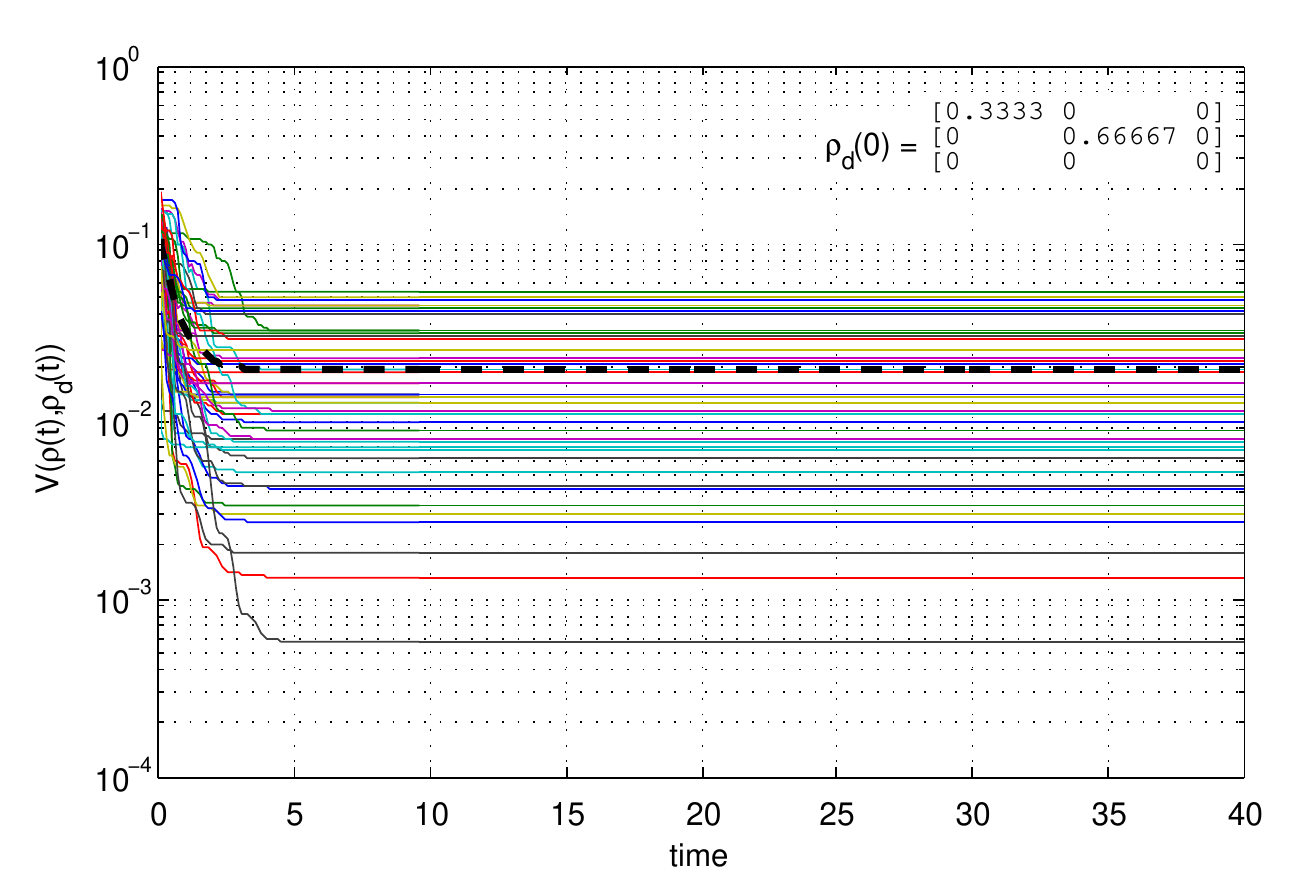}}
\subfloat[Non-stationary target state with irregular $E$]%
  {\includegraphics[width=0.49\textwidth]{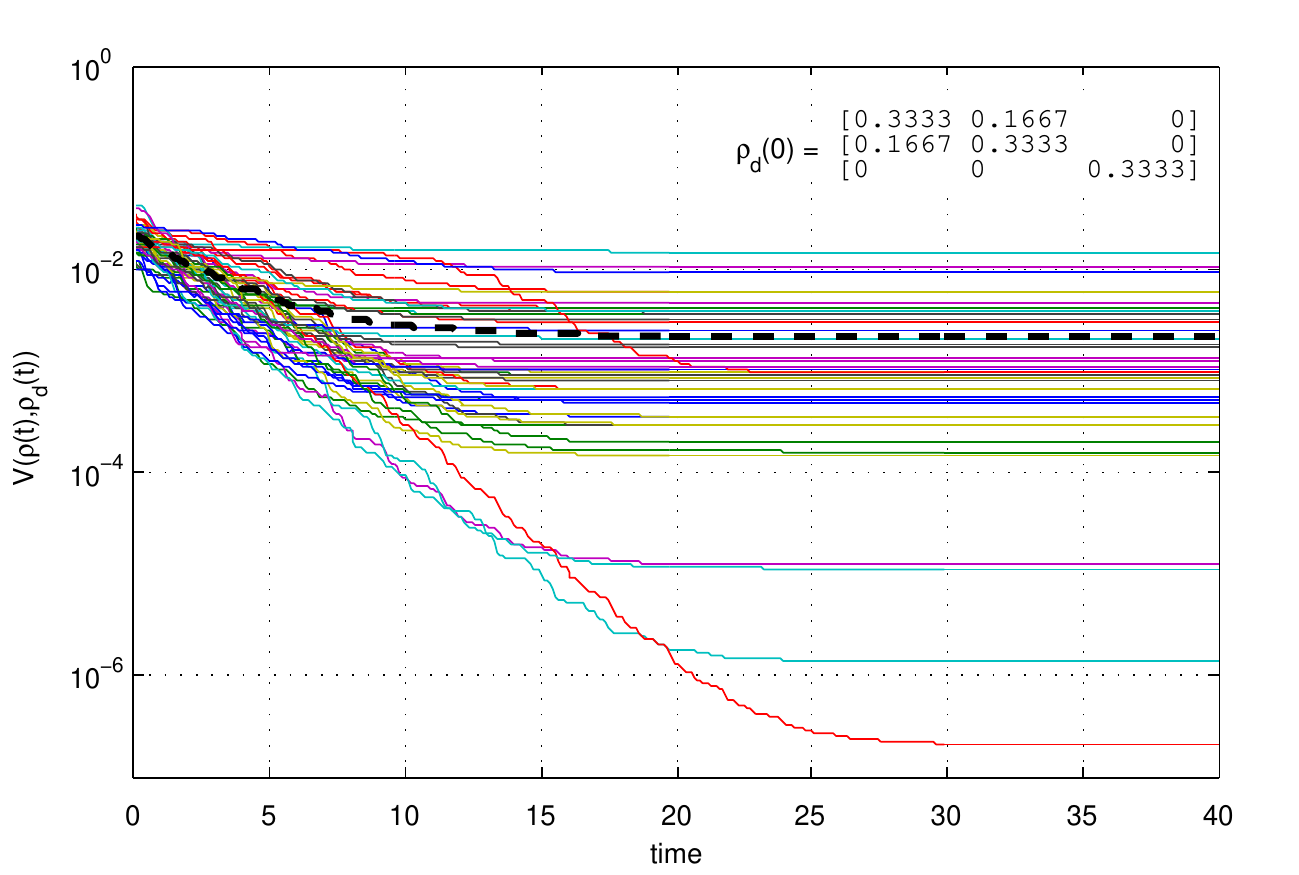}}
\end{figure*}

\subsection{Generic non-stationary target state}

In this case characterizing the invariant set is more complicated as
$E$ may contain points with nonzero diagonal commutators.
\begin{example}
\label{example:1} Let $(\rho(0),\rho_d(0))=(\rho_1,\rho_2)$ with
\begin{eqnarray*}
\rho_1= \begin{bmatrix}
\frac{1}{12} & -\frac{1}{12} & -\frac{1}{12}  \\
-\frac{1}{12} & \frac{11}{24} & \frac{1}{8} \\
-\frac{1}{12} & \frac{1}{8}   & \frac{11}{24}
\end{bmatrix}, \,
\rho_2= \begin{bmatrix}
\frac{1}{3} & -\frac{i}{12} & \frac{i}{12} \\
\frac{i}{12} & \frac{1}{3}  & -\frac{i}{4} \\
-\frac{i}{12} & \frac{i}{4} & \frac{1}{3}
\end{bmatrix}.
\end{eqnarray*}
$\rho_1$ and $\rho_2$ are isospectral and the commutator
$[\rho_1,\rho_2]=\frac{11i}{144}\diag(0,1,-1)$ is diagonal, and thus
$(\rho_1,\rho_2)\in E$, but $[\rho_1,\rho_2]\ne 0$.
\end{example}

When $\rho_d(0)$ is chosen such that $E$ contains points with nonzero
diagonal commutators, Fig.~\ref{fig1}(d) shows that all trajectories
generated by the simulations fail to converge to $\rho_d(t)$, and the
original control design becomes ineffective, even for systems with ideal
Hamiltonians.  Fortunately, however, the above example is quite
exceptional.  We shall see that $E=\{[\rho_1,\rho_2]=0\}$ still holds
for a very large class of generic target states $\rho_d(t)$, and in
these cases Lyapunov control tends to be effective. For convenience, $E$
is called regular if it only contains points with zero commutators, and
irregular otherwise.

Noting that we can write $[\rho_1,\rho_2]=-\Ad_{\rho_2}(\rho_1)$, where
$\Ad_{\rho_2}$ is a linear map from the Hermitian or anti-Hermitian
matrices into $\su(n)$, let $A(\vec{s}_2)$ be the real $(n^2-1)\times
(n^2-1)$ matrix corresponding to the Bloch representation of
$\Ad_{\rho_2}$.  Recall $\su(n)=\T\oplus\C$ and $\RR^{n^2-1}=S_\T\oplus
S_\C$, where $S_\C$ and $S_\T$ are the real subspaces corresponding to
the Cartan and non-Cartan subspaces, $\C$ and $\T$, respectively.  Let
$\tilde{A}(\vec{s}_2)$ be the first $n^2-n$ rows of $A(\vec{s}_2)$
(whose image is $S_\T$).  We have the following lemma, with proof in
Appendix~\ref{app:proof:nonstat:invar}:

\begin{lemma}
\label{lemma:non-sta}
Given a generic $\rho_d(t)$, the LaSalle invariant set $E$ is
irregular if and only if $\rank\tilde{A}(\vec{s}_d(0))<n^2-n$.
\end{lemma}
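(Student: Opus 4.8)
The plan is to recast the lemma, via Theorem~\ref{thm:lasalle:4}, as a linear-algebra comparison between $\ker\tilde A(\vec s_d(0))$ and $\ker\Ad_{\rho_2}$, and then dispose of the spectral constraint by a perturbation argument. Throughout I work in the basis in which $H_0$ is diagonal and write $\rho_2:=\rho_d(0)$. By Theorem~\ref{thm:lasalle:4} a pair $(\rho_1,\rho_2)$ lies in $E$ iff $[\rho_1,\rho_2]$ is diagonal; and since the target evolution conjugates both arguments by the diagonal unitary $e^{-iH_0t}$, which (for strongly regular $H_0$) preserves both the set of diagonal matrices and the off-diagonal support of any matrix, the part of $E$ relevant to this target trajectory is irregular iff there exists $\rho_1\in\M$ with $[\rho_1,\rho_2]$ diagonal and $[\rho_1,\rho_2]\ne0$. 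In Bloch coordinates ``$[\rho_1,\rho_2]$ diagonal'' is exactly $\tilde A(\vec s_d(0))\vec s_1=0$ (the $S_\T$-component of $A(\vec s_d(0))\vec s_1$ vanishes), while ``$[\rho_1,\rho_2]=0$'' is $A(\vec s_d(0))\vec s_1=0$, i.e. $\vec s_1\in\ker\Ad_{\rho_2}$. For generic $\rho_2$ the kernel of $\Ad_{\rho_2}$ on the $(n^2-1)$-dimensional space of traceless Hermitian matrices is exactly the traceless part of the centralizer of $\rho_2$, hence has dimension $n-1$, and it is contained in $\ker\tilde A(\vec s_d(0))$ (if $[\rho_2,X]=0$ then its $S_\T$-component certainly vanishes).

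For necessity, suppose $E$ is irregular, with witness $\rho_1\in\M$. Its reduced Bloch vector $\vec s_1$ is nonzero ($\rho_1$ is generic, hence not maximally mixed), lies in $\ker\tilde A(\vec s_d(0))$, and does \emph{not} lie in $\ker\Ad_{\rho_2}$ (because $[\rho_1,\rho_2]\ne0$). Thus $\ker\tilde A(\vec s_d(0))$ strictly contains an $(n-1)$-dimensional subspace, so $\dim\ker\tilde A(\vec s_d(0))\ge n$ and therefore $\rank\tilde A(\vec s_d(0))\le n^2-1-n<n^2-n$.

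For sufficiency, assume $\rank\tilde A(\vec s_d(0))<n^2-n$, so $\dim\ker\tilde A(\vec s_d(0))\ge n>\dim\ker\Ad_{\rho_2}$ and there is a traceless Hermitian $X_0$ with $[X_0,\rho_2]$ diagonal and nonzero. Since $[X_0+Z,\rho_2]=[X_0,\rho_2]$ for every $Z$ in the $n$-dimensional commutant of $\rho_2$, I would look for $Z$ in that commutant and a small $\epsilon>0$ such that $\rho_1:=\epsilon X_0+Z$ has exactly the spectrum of $\rho_d$; then $\rho_1\in\M$ (positivity is automatic from the eigenvalues), and $[\rho_1,\rho_2]=\epsilon[X_0,\rho_2]$ is diagonal and nonzero, so $(\rho_1,\rho_2)\in E$ certifies irregularity. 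To produce $Z$, pass to the eigenbasis of $\rho_2$, where $\rho_2=\diag(w_1,\dots,w_n)$ with distinct $w_j$ and $Z$ ranges over all real diagonal matrices: the matrix $\epsilon X_0+Z$ then has a fixed off-diagonal part of size $O(\epsilon)$ and a freely adjustable diagonal, and its first-order eigenvalue corrections vanish since the off-diagonal part has zero diagonal. Because the eigenvalues of a diagonal matrix with distinct entries are simple, the map from the diagonal to the ordered spectrum is, for $\epsilon$ small, a local diffeomorphism near $\diag(w_1,\dots,w_n)$ whose value there tends to $(w_1,\dots,w_n)$ as $\epsilon\to0$; the inverse function theorem then supplies a diagonal (hence a $Z$) for which the spectrum is exactly $(w_1,\dots,w_n)$.

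The hard part is this last step: upgrading the purely ``linear'' witness $X_0$ — which only knows that $[X_0,\rho_2]$ is diagonal — to a genuine point of the isospectral manifold $\M$ without destroying the nonvanishing of the commutator. The device that makes it go through is to rescale $X_0$ by $\epsilon$, so that the off-diagonal perturbation becomes negligible and the $n-1$ free diagonal (commutant) parameters suffice to restore the prescribed spectrum, while the commutator merely scales by $\epsilon$ and stays diagonal and nonzero. Once the problem has been recast as ``$\ker\tilde A(\vec s_d(0))$ strictly larger than $\ker\Ad_{\rho_2}$'', the remaining bookkeeping — the dimension count for the centralizer and the continuity/perturbation statement for eigenvalues — is routine, and the (easier) case analysis for non-generic $\rho_d$ would only change these counts, not the structure of the argument.
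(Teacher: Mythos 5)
Your proof is correct, and while its necessity half is essentially the paper's argument read in contrapositive form, its sufficiency half supplies a step the paper only gestures at. The paper fixes $\rho_2=\rho_d(0)$, shows $\ker A(\vec{s}_2)$ has dimension $n-1$ (via the span of the permuted diagonal states and a circulant determinant), and concludes that $\rank\tilde{A}=n^2-n$ forces the remaining Cartan rows of $A$ to be linear combinations of the rows of $\tilde{A}$, so a diagonal commutator must vanish; your kernel-dimension count (a vector in $\ker\tilde{A}\setminus\ker A$ together with $\dim\ker A=n-1$ forces $\dim\ker\tilde{A}\ge n$, hence $\rank\tilde{A}<n^2-n$) is the same linear algebra run backwards, and your identification of $\ker\Ad_{\rho_2}$ with the traceless part of the centralizer of a matrix with simple spectrum is cleaner than the circulant argument. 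The genuine difference is the converse direction, which the paper dispatches with ``similarly'': a kernel vector of $\tilde{A}$ is merely a traceless Hermitian matrix with diagonal nonzero commutator, not a point of $\M$, so an isospectral witness must actually be constructed. Your rescaling of the witness by $\epsilon$ and repair of the spectrum by adjusting the commutant component via the implicit function theorem (legitimate because the eigenvalues of the generic $\rho_d$ are simple, so the spectrum map has identity Jacobian at $\epsilon=0$, and the commutator is unaffected by the commutant part) fills exactly that gap, at the cost of one analytic ingredient in an otherwise purely algebraic proof. Both treatments also share the same preliminary reduction of irregularity to pairs whose second component is $\rho_d(0)$, using that conjugation by the diagonal unitary $e^{-iH_0 t}$ preserves both diagonality and non-vanishing of the commutator.
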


This lemma provides a necessary and sufficient condition on $\rho_d(0)$
to ensure that $[\rho_1,\rho_2]$ diagonal implies
$[\rho_1,\rho_2]=0$. Assuming the first $n^2-n$ rows correspond to
$S_\T$, let $\tilde{A}_1$ be the submatrix generated from the first
$n^2-n$ rows and last $n^2-n$ columns of $\tilde{A}(\vec{s}_d(0))$.  If
$\det(\tilde{A}_1)\ne 0$ then $\rank \tilde{A}(\vec{s}_d(0))=n^2-n$,
hence $E=\{[\rho_1,\rho_2]=0\}$. We can easily verify that if the
diagonal elements of $\rho_d(0)$ are not equal then $\det(\tilde{A}_1)$$\det[\tilde{A}(\rho_d(0))]=0$.
is a non-trivial polynomial, i.e., $\det(\tilde{A}_1)$ can only have a
finite set of zeros.  Hence:

\begin{theorem}
\label{thm:nonstat:invar} The LaSalle invariant set for a generic
$\rho_d(t)$ is irregular only if $\det[\tilde{A}(\rho_d(0))]=0$ or some
of the diagonal elements of $\rho_d(0)$ are equal.
\end{theorem}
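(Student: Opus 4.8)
The plan is to obtain the theorem from Lemma~\ref{lemma:non-sta} together with a single favorable evaluation of the relevant determinant. By that lemma, $E$ is irregular precisely when $\rank\tilde{A}(\vec{s}_d(0)) < n^2-n$, so the task reduces to showing that this rank drop can occur only where $\det\tilde{A}_1 = 0$ or two diagonal entries of $\rho_d(0)$ coincide. First I would record the elementary observation that settles the ``only if'': $\tilde{A}(\vec{s}_d(0))$ has $n^2-n$ rows and $n^2-1$ columns, and $\tilde{A}_1$ is the square $(n^2-n)\times(n^2-n)$ block obtained by keeping the $S_\T$-columns (the decomposition $\RR^{n^2-1}=S_\C\oplus S_\T$ of the domain is intrinsic, fixed once $H_0$ is diagonalized). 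If $\det\tilde{A}_1\neq 0$ this block is a nonsingular maximal minor, so $\tilde{A}(\vec{s}_d(0))$ has full row rank $n^2-n$ and, by Lemma~\ref{lemma:non-sta}, $E$ is regular. The contrapositive is exactly the asserted implication.

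The substantive point --- what keeps the conclusion from being vacuous --- is that $\det\tilde{A}_1$ is a \emph{nontrivial} polynomial in the entries of $\rho_d(0)$, so that its vanishing locus in $\M$, and hence the set of irregular target states, is genuinely small. To establish nontriviality I would evaluate at the diagonal point $\rho_d(0)=\diag(w_1,\dots,w_n)$, which lies in $\M$ because $\rho_d$ is generic (the $w_k$ are distinct). For this choice $\Ad_{\rho_d(0)}$ annihilates the Cartan directions $S_\C$ and preserves each root $2$-plane $\T_{k\ell}$, acting there by a skew $2\times 2$ block proportional to $w_k-w_\ell$; this is read off from the commutation relations of $\lambda_{k\ell},\bar\lambda_{k\ell}$ with diagonal matrices already used in~(\ref{eq:Bm}). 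Hence at this point $\tilde{A}_1$ is block-diagonal over the $\binom{n}{2}$ root planes, with $\det\tilde{A}_1=\pm\prod_{k<\ell}(w_k-w_\ell)^2\neq 0$. Since $\M$ is connected, a polynomial that is not identically zero on $\M$ vanishes only on a proper, measure-zero subset; the same diagonal point shows that ``some diagonal entries of $\rho_d(0)$ are equal'' likewise cuts out a proper subvariety of $\M$. Combining: if $E$ is irregular then $\rho_d(0)$ lies in $\{\det\tilde{A}_1=0\}\cup\{\text{diagonal entries of }\rho_d(0)\text{ coincide}\}$, which is the claim.

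I expect the only genuine obstacle to be the index bookkeeping in the determinant computation: one must be sure that the rows of $\tilde{A}(\vec{s}_d(0))$ ``whose image is $S_\T$'' and the columns forming $\tilde{A}_1$ are both honestly indexed by the root-space basis $\{\lambda_{k\ell},\bar\lambda_{k\ell}\}$ in the working basis where $H_0$ is diagonal, with the $S_\C$-part split off cleanly, so that at the diagonal point $\tilde{A}_1$ really is the block-diagonal matrix described above rather than a permuted or mixed version of it. Once this labelling is pinned down, the block structure and the product formula for the determinant are immediate, and the remainder of the argument is formal.
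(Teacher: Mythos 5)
Your proposal is correct and follows essentially the same route as the paper: reduce to Lemma~\ref{lemma:non-sta}, note that $\det\tilde{A}_1\neq 0$ forces $\rank\tilde{A}(\vec{s}_d(0))=n^2-n$ and hence regularity of $E$, and then establish nontriviality of the determinant as a polynomial in $\rho_d(0)$ by observing that when the off-diagonal entries vanish it reduces to $\prod_{k<\ell}(w_k-w_\ell)^2\neq 0$. Your explicit block-diagonal evaluation over the root planes is exactly the ``easy verification'' the paper leaves implicit, and your reading of $\tilde{A}_1$ as the $S_\T$-column block is the intended one.
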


Thus for most generic non-stationary $\rho_d(0)$ we still have $E$
regular.  In this case, given $\rho_d(0)$, let $\rho_d^{(k)}(0)$ be
the $n!$ critical points of $V(\rho)=V(\rho,\rho_d(0))$ with
critical values $V_k$.  Then we can easily see that the $n!$ flows
$(\rho_d^{(k)}(t),\rho_d(t))$ starting from
$(\rho_d^{(k)}(0),\rho_d(0))$ with $f\equiv 0$ are solutions of the
dynamical system satisfying $[\rho_d^{(k)}(t),\rho_d(t)]=0$ for any
$t$, and thus the corresponding trajectories are in the LaSalle
invariant set and are the critical points of $V$ with
$V(\rho_d^{(k)}(t),\rho_d(t))=V_k$, and we can show that any
$\rho(t)$ must converge to one of these critical trajectories. The
trajectories with $\rho_d^{(k)}(t)\neq \rho_d(t)$ cannot be
asymptotically stable as they correspond to unstable critical points
of $V$.  Furthermore, let $V_k$ be the critical values of $V$
ordered in an increasing sequence with $V_0=0$, corresponding to the
global minimum.  Then all initial states $\rho(0)$ with
$V(\rho(0),\rho_d(0))<V_1$ must converge to $\rho_d(t)$ as $V$ is
monotonically decreasing, and thus $\rho_d(t)$ is locally
asymptotically stable.  We can summarize these findings in the
following:

\begin{theorem}
\label{thm:generic:conv1} Given a generic non-stationary $\rho_d(t)$ if
the LaSalle invariant set $E$ is regular, then any trajectory $\rho(t)$
converges to one of the $n!$ critical trajectories $\rho_d^{(k)}(t)$,
$k=1,\ldots,n!$.  All critical trajectories are unstable, except
$\rho_d(t)$, which is locally asymptotically stable, and the global
maximum $\rho^{(n!)}(t)$, which is repulsive.
\end{theorem}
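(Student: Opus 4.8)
The plan is to combine LaSalle's principle (Theorem~\ref{thm:lasalle:2}), regularity of the invariant set, and monotonicity of $V$, the only genuinely new ingredient being a structural description of $E$ as a disjoint union of $n!$ well-separated pieces. Write $x(t)=(\rho(t),\rho_d(t))\in\M\times\M$. As $\M\times\M$ is compact, $x(t)$ is precompact and its $\omega$-limit set $\omega(x)$ is nonempty, compact, connected and invariant. Since $V$ is nonincreasing along trajectories and bounded below, $V(x(t))\searrow c\ge 0$, and by invariance $V\equiv c$, hence $\dot V\equiv 0$, on $\omega(x)$; therefore $\omega(x)\subset E$. With $E$ regular, $E=\{(\rho_1,\rho_2)\in\M\times\M:[\rho_1,\rho_2]=0\}$, which by Theorem~\ref{thm:crit:1} contains all critical points of $V$ on $\M\times\M$.

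I would next make $E$ explicit. Since $\rho_d(0)$ is generic, every $\mu\in\M$ shares the same distinct eigenvalues $w_1>\dots>w_n$, so its spectral projectors $P_j(\mu)$ are globally defined and depend continuously---hence, $\M$ being compact, uniformly continuously---on $\mu$. For a permutation $\sigma\in S_n$ put $\mu^{(\sigma)}:=\sum_j w_{\sigma(j)}P_j(\mu)$; then two elements $\rho_1,\rho_2\in\M$ commute if and only if $\rho_1=\rho_2^{(\sigma)}$ for a unique $\sigma$. Hence $E=\bigsqcup_{\sigma\in S_n}E_\sigma$ with $E_\sigma:=\{(\mu^{(\sigma)},\mu):\mu\in\M\}$, a disjoint union of $n!$ connected sets, each homeomorphic to $\M$, and being finitely many disjoint compacta they are mutually separated by some $d_0>0$. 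On $E$ one has $f=\Tr(-iH_1[\rho,\rho_d])=0$, so the dynamics on $E$ is conjugation by $e^{-iH_0t}$, which commutes with $\mu\mapsto\mu^{(\sigma)}$; thus each $E_\sigma$ is flow-invariant, and $\rho_d^{(\sigma)}(t):=e^{-iH_0t}\rho_d^{(\sigma)}(0)e^{iH_0t}=(\rho_d(t))^{(\sigma)}$ are exactly the $n!$ critical trajectories of $V$, along which $V$ is the constant critical value $V_\sigma=\tfrac12\sum_j(w_j-w_{\sigma(j)})^2$ of Theorem~\ref{thm:crit:2}.

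For the convergence claim, $\omega(x)$ is connected and lies in the separated union $\bigsqcup_\sigma E_\sigma$, so $\omega(x)\subset E_\sigma$ for a single $\sigma$; a compactness argument then gives $\operatorname{dist}(x(t),E_\sigma)\to 0$, i.e.\ there are $\mu(t)\in\M$ with $\norm{\rho_d(t)-\mu(t)}\to 0$ and $\norm{\rho(t)-\mu(t)^{(\sigma)}}\to 0$, whence uniform continuity of $\mu\mapsto\mu^{(\sigma)}$ yields $\norm{\rho(t)-\rho_d^{(\sigma)}(t)}\to 0$: every trajectory converges to one of the $n!$ critical trajectories. The stability statements follow from monotonicity of $V$, the identity $V=\tfrac12\norm{\rho-\rho_d}^2$ on isospectral pairs, and the facts that $V_\sigma=0$ iff $\sigma=\mathrm{id}$ and that $V_\sigma$ is maximal at the unique order-reversing $\sigma^\ast$ (rearrangement inequality). (i) $\rho_d(t)$ corresponds to $E_{\mathrm{id}}$: since $V$ is nonincreasing and controls $\norm{\rho-\rho_d}$, $\rho_d(t)$ is Lyapunov stable, and if $V(x(0))<V_1$, the least positive critical value, then $V(x(t))<V_1$ for all $t$, which forces the limiting component to be $E_{\mathrm{id}}$ and hence $\rho(t)\to\rho_d(t)$---local asymptotic stability. (ii) For the global maximum $\rho^{(n!)}(t)$, i.e.\ the $E_{\sigma^\ast}$-trajectory, any $\rho(0)$ near but distinct from $\rho^{(n!)}(0)$ has $V(x(0))<V_{\rm max}$ (the global maximum of $V$ on $\M\times\M$, attained only on $E_{\sigma^\ast}$), so $V(x(t))<V_{\rm max}$ always, the limiting component is not $E_{\sigma^\ast}$, and eventually $\operatorname{dist}(x(t),E_{\sigma^\ast})\ge d_0/2$: the trajectory is repelled. (iii) For a saddle $\rho_d^{(k)}(t)$ with $0<V_k<V_{\rm max}$, Theorem~\ref{thm:crit:2} provides a descent direction of $V(\cdot,\rho_d(0))$ at $\rho_d^{(k)}(0)$, so arbitrarily close there is $\rho(0)$ with $V(x(0))<V_k$; as in (ii) the trajectory converges to a component other than $E_k$ and hence leaves every small neighborhood of $\rho_d^{(k)}(t)$: instability.

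The step I expect to be the main obstacle is the middle one: showing that the regular invariant set really decomposes into $n!$ uniformly separated connected pieces---which rests on the global single-valuedness of the spectral-permutation map $\mu\mapsto\mu^{(\sigma)}$ for generic (hence collision-free) spectra---and converting ``$\omega(x)\subset E_\sigma$'' into genuine convergence of $\rho(t)$ to the quasiperiodic target trajectory $\rho_d^{(\sigma)}(t)$, which itself never settles down; this conversion relies on the $e^{-iH_0t}$-equivariance of the flow on $E$ and on uniform continuity. Everything else is routine LaSalle/Lyapunov bookkeeping.
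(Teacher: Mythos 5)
Your proposal is correct and takes essentially the same route as the paper: LaSalle convergence to the regular invariant set, identification of that set with the $n!$ critical trajectories of $V$, and monotonicity of $V$ combined with the Morse structure of Theorem~\ref{thm:crit:2} to obtain local asymptotic stability of $\rho_d(t)$ (via $V<V_1$), instability of the saddle trajectories, and repulsiveness of the global maximum. The only difference is one of rigor rather than method: you supply details the paper leaves as assertions---the decomposition of $E$ into $n!$ separated compact sheets via $\mu\mapsto\mu^{(\sigma)}$, connectedness of the $\omega$-limit set, and the equivariance/uniform-continuity step that upgrades convergence to a component into genuine trajectory tracking.
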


As illustrated in Fig.~\ref{fig1}(b), for regular $E$, all trajectories
$\rho(t)$ in computer simulations keep converging to $\rho_d(t)$.
Although the convergence speed is slow compared to the case of generic
stationary $\rho_d$, the case of regular $E$ is qualitatively different
from the irregular one in Fig~\ref{fig1}(d), where the rate of
convergence drops to zero after some time, resulting in flat-lining of
the trajectories in the semi-logarithmic plot.  We conclude from these
simulations that for a generic non-stationary $\rho_d(t)$ Lyapunov
control is still effective even $E$ is regular, although the convergence
speed may be slow, while when $E$ is irregular, the original control
design becomes ineffective even for systems with ideal Hamiltonians.

\section{(Non)Effectiveness of Lyapunov Control for Non-ideal Systems}
\label{sec:conv_real}

In the previous section we showed that $\rho(t)=\rho_d(t)$ is the only
locally asymptotically stable trajectory if the system is ideal and the
target state is regular.  Realistic systems, unfortunately, often do not
satisfy the strong Hamiltonian requirements, and we now show that in
this case the target state, even if it is stationary, ceases to be a
hyperbolic critical point.  A center manifold emerges and most solutions
do not converge to the target state, rendering the method ineffective.
This clearly illustrates that the dynamics (\ref{eqn:auto}) is very
different from the gradient flow of the Lyapunov function and shows that
the LaSalle invariance principle and critical point analysis of the
Lyapunov function do not suffice to analyze the stability for realistic
systems, and eigenvalue analysis of the linearized system is necessary.
To fully understand the dynamics in this situation we need to analyze it
case by case.  In the following we present an analysis for three-level
systems, which illustrates the techniques that can be applied to other
cases.

\subsection{$H_1$ not fully connected}

Assume $H_0$ still strongly regular but $H_1$ be not fully
connected, for example, consider
\begin{equation*}
H_0 =\begin{bmatrix}
a_1 & 0 & 0  \\
0 & a_2 & 0  \\
0 & 0 & a_3
\end{bmatrix}, \qquad
H_1=\begin{bmatrix}
0 & b_1 & 0  \\
b_1^* & 0 & b_2  \\
0 & b_2^* & 0
\end{bmatrix}
\end{equation*}
where we assume $a_1<a_2<a_3$ and $b_1,b_2\ne 0$.

According to the characterization of the LaSalle invariant set $E$
in Section~\ref{sec:LaSalle}, a necessary condition for
$(\rho_1,\rho_2)$ to be in the invariant set $E$ is
$[\rho_1,\rho_2]$ to be orthogonal to the subspace spanned by
$\B=\Span\{B_m\}_{m=0}^\infty$ with $B_m=\Ad_{-iH_0}^{(m)}(-iH_1)$.
Comparison with (\ref{eq:Bm}) shows that if the coefficient
$b_{k\ell}=0$ then none of the generators $B_m$ have support in the
root space $\T_{k\ell}$ of the Lie algebra, and it is easy to see
that the subspace of $\su(n)$ generated by $\B$ is the direct sum of
all root spaces $\T_{k\ell}$ with $b_{k\ell}\neq 0$. Thus, in our
example, a necessary condition for $(\rho_1,\rho_2)$ to be in the
invariant set $E$ is $[\rho_1,\rho_2] \in \T_{13}\oplus\C$, which
shows that we must have
\begin{equation}
 \label{eq:rho13}
 [\rho_1,\rho_2]= \begin{bmatrix}
  \alpha_{11} & 0 & \alpha_{13}  \\
  0 & \alpha_{22} & 0  \\
  \alpha_{13}^* & 0 & \alpha_{33}
\end{bmatrix}.
\end{equation}
Furthermore, if $(\rho_1,\rho_2)$ is of type~(\ref{eq:rho13}) then
\begin{equation*}
 U_0(t) [\rho_1,\rho_2] U_0(t)^\dagger
 = \begin{bmatrix}
  \alpha_{11} & 0 & e^{i\omega_{13}t}\alpha_{13}  \\
  0 & \alpha_{22} & 0  \\
 e^{-i\omega_{13}t}\alpha_{13}^* & 0 & \alpha_{33}
\end{bmatrix}
\end{equation*}
with $U_0=e^{-iH_0t}$ and $\omega_{k\ell}=a_\ell-a_k$, also has this
form.  Therefore, $[\rho_1,\rho_2]\in \C\oplus\T_{13}$ is a
necessary and sufficient condition for the invariant set $E$.  For
stationary generic $\rho_d$, $E$ consists of all $(\rho_1,\rho_2)$
with $\rho_2=\rho_d$ and
\begin{equation*}
\rho_1= \begin{bmatrix}
       \beta_{11} & 0 & \beta_{13}  \\
       0 & \beta_{22} & 0  \\
       \beta_{13}^* & 0 & \beta_{33}
\end{bmatrix}.
\end{equation*}
Thus, the invariant set $E$ contains $3!=6$ stationary states
corresponding to $\beta_{13}=0$, which coincide with the critical
points of $V(\rho)$, and an infinite number of trajectories with
$\beta_{13}\ne 0$.

We check the stability of linearized system near these fixed points,
concentrating on the local behavior near $\vec{s}_d$.  Working with a
real representation of the linearized system~(\ref{eqn:linear}) and
using the same notation as before, we can still show that
$D_f(\vec{s}_d)$ has $n^2-n$ nonzero eigenvalues, with $n=3$ in our
case.  Since $-iH_1$ has no support in the root space $\T_{13}$, the
$\lambda_{13}$ and $\bar\lambda_{13}$ components of $A_1\vec{s}_d$,
(which correspond to $[-iH_1,\rho_d]$) vanish, and $D_f(\vec{s}_d)$ has
a pair of purely imaginary eigenvalues, whose eigenspaces span the root
space $\T_{13}$, and four eigenvalues with non-zero real parts, which
must be negative as $\vec{s}_d$ is locally stable from the Lyapunov
construction.  However, the existence of two purely imaginary
eigenvalues means that the target state is no longer a hyperbolic fixed
point but there is a center manifold of dimension two. Center manifold
theory shows that the qualitative behavior near the fixed point is
determined by the qualitative behavior of the flows on the center
manifold~\cite{Carr}.  Therefore, the next step is to determine the
center manifold.  For dimensions $>2$ this is usually a non-trivial
problem.  However, if we can find an invariant manifold those tangent
space at $\vec{s}_d$ equals the tangent space of the center manifold,
then this manifold is the center manifold.  In our case solutions in 
the invariant set form a manifold diffeomorphic to the Bloch sphere 
for a qubit system, with the natural embedding
\begin{equation*}
\rho=\begin{bmatrix}
\beta_{11} & 0 & \beta_{13}  \\
0 & \beta_{22} & 0  \\
\beta_{13}^* & 0 & \beta_{33}
\end{bmatrix}
\to \rho'= \frac{1}{\beta_{11}+\beta_{33}}
\begin{bmatrix}
\beta_{11} & \beta_{13}\\
\beta_{13}^* &  \beta_{33}
\end{bmatrix}
\end{equation*}
which maps the state $\rho_d$ (or $\vec{s}_d$) of the qutrit to the
point ${\vec{s}_d}'$ with $\rho_d'=\diag(w_1,w_3)/(w_1+w_3)$, and
the two tangent vectors of the center manifold at $\rho_d$ to the
two tangent vectors of the Bloch sphere at ${\vec{s}_d}'$.  Thus
this manifold is the required center manifold at $\rho_d$ (or
$\vec{s}_d$).  On the center manifold $\rho_d$ is a center with the
nearby solutions cycling around it.  The Hartman-Grobman theorem in
center manifold theory proved by Carr~\cite{Carr} shows that all
solutions outside $E$ converge exponentially to solutions on the
center manifold belonging to $\vec{s}_d$, while the solutions
actually converging to $\vec{s}_d$ only constitute a set of measure
zero.  Therefore, almost all solutions near $\rho_d$ converge to
solutions on the center manifold other than $\rho_d$ and $\rho_d$
becomes no longer asymptotically stable (see Fig.~\ref{fig1}(c)).

\subsection{$H_0$ not strongly regular}

Let $H_1$ fully connected but $H_0$ not strongly regular, e.g.,
\begin{equation*}
H_0= \begin{bmatrix}
0 & 0 & 0  \\
0 & \omega & 0  \\
0 & 0 & 2\omega
\end{bmatrix}, \quad
H_1=\begin{bmatrix}
0 & 1 & 1  \\
1 & 0 & 1  \\
1 & 1 & 0
\end{bmatrix}.
\end{equation*}
Analogously to the section above, we can show that for a given
stationary generic $\rho_d$, the LaSalle invariant set forms a center
manifold with the target state as a center.  Hence, almost all
trajectories near $\rho_d$ converge to other solutions on the center
manifold and $\vec{s}_d$ is not asymptotically stable.

\section{Concluding discussion}
\label{sec:concluding}

We have studied a control design for tracking natural trajectories of
generic quantum states based on the Hilbert-Schmidt distance as a
Lyapunov function.  The analysis shows that the method is effective for
generic density operators if and only if the invariant set is regular,
i.e., contains only trajectories corresponding to critical points of the
Lyapunov function.  Since the Lyapunov function has exactly $n!$ isolated
critical points for generic states, regularity of the invariant set in
this case immediately implies that the target state or trajectory is
isolated and thus locally asymptotically stable, but a detailed analysis
shows that we have almost global convergence in this case.  Although the
set of states that do not converge to the target state is larger than
previously asserted \cite{altafini2}, for ideal systems it is still only
a small subset of the state space, and for stationary target states we
can show that it has measure zero.  When the LaSalle invariant set is
not regular in the other hand, the method not only becomes ineffective,
and the target state ceases to be locally asymptotically stable, but a
center manifold emerges around the target state, which exponentially
attracts all trajectories, preventing convergence to the target state.

The results follow from several steps.  Computation of both the LaSalle
invariant set and the set of critical points of the Lyapunov function
shows that a necessary condition for regularity of the invariant set is
that the system Hamiltonian satisfy certain rather strict conditions,
effectively equivalent to controllability of the linearization.  Further
analysis shows that when we restrict our attention to generic states,
the Lyapunov function is a Morse function with $n!$ isolated critical
points and the target state as the unique global minimum in addition to
a unique global maximum and $n!-2$ saddle points.  The critical points
of the Lyapunov function further correspond to fixed points or critical
trajectories of the dynamical system.  If the dynamical system were a
gradient flow of the Lyapunov function this would allow us to almost
immediately infer almost global convergence to the target states.  As
this is not the case we must analyze the linearization of the dynamics
about the critical points and show that they are hyperbolic.  We do this
rigorously for stationary target states, where the analysis shows that
the $n!$ critical points of the Lyapunov function are indeed hyperbolic
fixed points of the dynamical system if the system Hamiltonian ideal.
For stationary target states this condition also implies for regularity
of the invariant set, and as the Lyapunov function is a Morse function
in our case, we can use it to compute the dimensions of the stable and
unstable manifolds at each of the hyperbolic critical points of the
dynamical system.  This shows that all critical points except the target
state have stable manifolds of dimensions less than the state space and
allows us to conclude that almost all initial states will converge to
the target state in this case.  The flipside of this analysis is that
the target state ceases to be a hyperbolic fixed point of the dynamical
system if the system Hamiltonian is no longer ideal, and in this case a
center manifold emerges around the target state, which exponentially
attracts all trajectories.  For non-stationary target states the method
can fail even if the system Hamiltonian is ideal, for target states that
give rise to a non-regular invariant set, but we also show that such
target states are a measure-zero subset of the state space.

\section*{Acknowledgments}

XW is supported by the Cambridge Overseas Trust and an Elizabeth Cherry
Major Scholarship from Hughes Hall, Cambridge.  SGS is acknowledges
support from an \mbox{EPSRC} Advanced Research Fellowship, Hitachi, a
former Marie Curie Fellowship under EU Knowledge Transfer Programme
MTDK-CT-2004-509223, and NSF Grant PHY05-51164.  We thank Peter
Pemberton-Ross, Tsung-Lung Tsai, Christopher Taylor, Jack Waldron, Jony
Evans, Dan Jane, Yaxiang Yuan, Jonathan Dawes, Lluis Masanes, Rob
Spekkens, Ivan Smith for fruitful discussions, and the editors, Edmond
Jonckheere and Antonio Loria, for constructive suggestions.

\appendix

\subsection{Lie algebra generators}
\label{app:lie_algebra_basis}

\noindent A standard basis for the Lie algebra $\su(n)$ is given by
$\{\lambda_{k\ell},\bar{\lambda}_{k\ell},\lambda_k\}$ for $1\le
k<\ell\le n$, where
\begin{subequations}
\label{eq:lambda}
\begin{align}
 \lambda_k            &=  i(\hat{e}_{kk} - \hat{e}_{k+1,k+1}) \\
 \lambda_{k\ell}      &=  i(\hat{e}_{k\ell}+\hat{e}_{\ell k})\\
 \bar{\lambda}_{k\ell}&=   (\hat{e}_{k\ell}-\hat{e}_{\ell k})
\end{align}
\end{subequations}
and the $(k,\ell)^{\rm th}$ entry of the matrix $\hat{e}_{mn}$ is
$\delta_{km}\delta_{\ell n}$, and $i=\sqrt{-1}$. We have the useful
identities
\begin{subequations}
 \label{eq:lambda_prod}
\begin{align}
  &\Tr(\lambda_{k\ell}\lambda_{k'\ell'})
  = \Tr(\bar\lambda_{k\ell}\bar\lambda_{k'\ell'})
  = -2\delta_{kk'}\delta_{\ell\ell'} \\
  &\Tr(\lambda_{k\ell}\bar\lambda_{k'\ell'})=0
\end{align}
\end{subequations}
and for any diagonal matrix $D=\sum_{k=1}^n d_k\hat{e}_{kk}$
\begin{subequations}
\label{eq:lambda_comm}
\begin{align}
  [D,\lambda_k]             &= 0,\\
  [D,\lambda_{k\ell}]       &= +i(d_k-d_\ell) \bar{\lambda}_{k\ell}, \\
  [D,\bar{\lambda}_{k\ell}] &= -i(d_k-d_\ell) \lambda_{k\ell},
\end{align}
\end{subequations}
The basis~(\ref{eq:lambda}) is not orthonormal but we can define an
equivalent orthonormal basis $\{\sigma_m\}_{m=1}^{n^2-1}$ for
$\su(n)$ by normalizing the $n^2-n$ non-Cartan generators
$\frac{1}{\sqrt{2}}\lambda_{k\ell}$ and
$\frac{1}{\sqrt{2}}\bar{\lambda}_{k\ell}$, and defining the $n-1$
orthonormal generators for the Cartan subalgebra \(
 \sigma_{n^2-n+r} = i [r(r+1)]^{-1/2}
                     \left(\sum_{s=1}^{r} \hat{e}_{ss}
                     -r\hat{e}_{r+1,r+1} \right)
\) for $r=1,\ldots,n-1$.

\subsection{Lemmas in the proof of Theorem \ref{thm:generic:hyperbolic}}
\label{app:proof:generic:hyperbolic}

\begin{lemma}
$D_f(\vec{s}_0)$ vanishes on the subspace $S_\C$.
\end{lemma}

\begin{IEEEproof}
To show that $D_f(\vec{s}_0)\vec{s}=0$ for all $\vec{s}\in S_\C$, it
suffices to show that $A_0 \vec{s}=0$ and $\vec{s}_d^T A_1
\vec{s}=0$ for $\vec{s} \in S_\C$.  $\vec{s}\in S_\C$ corresponds to
density operators $\rho\in i\C$, i.e., $\rho$ diagonal.  As
$A_0\vec{s}$ is the Bloch vector associated with $[-iH_0,\rho]$,
$-iH_0$ is diagonal and diagonal matrices commute, $[-iH_0,\rho]=0$
and $A_0\vec{s}=0$ follows immediately.  To establish the second
part, note that for $i\rho\in\C$ and $-iH_1\in \T$, we have
$[-iH_1,i\rho] \in \T$, or $[-iH_1,\rho]\in i\T$, and $A_1\vec{s}
\in S_\T$.  Since $\rho_d$ is diagonal, i.e., $\vec{s}_d \in S_\C
\perp S_\T$, we have $\vec{s}_d^TA_1\vec{s}=0$ for $\vec{s}\in
S_\C$.
\end{IEEEproof}

\begin{lemma}
The restriction of $D_f(\vec{s}_0)$ to $S_\T$ is well-defined and
its matrix representation $B$ has $n^2-n$ non-zero eigenvalues.
\end{lemma}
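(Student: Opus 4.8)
The plan is to exploit that, once the stationarity condition $f(\vec s_0)=0$ (equivalently $\vec s_d^TA_1\vec s_0=0$, which holds because $\rho_0$ commutes with $\rho_d$) is imposed, $D_f(\vec s_0)=A_0+(A_1\vec s_0)(\vec s_d^TA_1)$ is a \emph{rank-one perturbation} of $A_0$. Write $\vec u=A_1\vec s_0$ and $\vec v^T=\vec s_d^TA_1$, i.e.\ $\vec v=-A_1\vec s_d$ by the antisymmetry of $A_1$, so $D_f(\vec s_0)=A_0+\vec u\vec v^T$. The lemma then splits into (i) the restriction of this map to $S_\T$ is a genuine endomorphism of $S_\T$, and (ii) its determinant is nonzero.

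For (i): the preceding lemma gives $D_f(\vec s_0)|_{S_\C}=0$, so it suffices to show $D_f(\vec s_0)(S_\T)\subseteq S_\T$. Both summands do this. First, $A_0$ is the Bloch representation of $\Ad_{-iH_0}$; since $-iH_0$ is diagonal, its commutator with a zero-diagonal Hermitian matrix is again zero-diagonal, so $A_0$ is in fact block diagonal with respect to $\RR^{n^2-1}=S_\T\oplus S_\C$ and preserves $S_\T$. Second, $\vec u=A_1\vec s_0$ is the Bloch vector of $[-iH_1,\rho_0]$, and since $\rho_0$ is diagonal the Cartan part of $-iH_1$ commutes with it while the remaining part yields a zero-diagonal commutator; hence $[-iH_1,\rho_0]\in i\T$ and the image of $\vec u\vec v^T$ lies in $S_\T$. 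Thus $B:=D_f(\vec s_0)|_{S_\T}$ is well defined; note also $\vec v=-A_1\vec s_d\in S_\T$ because $\rho_d$ is diagonal.

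For (ii): in the normalized root-space basis $\{\lambda_{k\ell},\bar\lambda_{k\ell}\}$ of $\T$, the commutator identities of Appendix~\ref{app:lie_algebra_basis} show that $A_0|_{S_\T}$ is block diagonal, acting on each two-dimensional root space $\T_{k\ell}$ as a $2\times2$ antisymmetric block $A_0^{(k\ell)}$ with off-diagonal entries $\pm\omega_{k\ell}$ and eigenvalues $\pm i\omega_{k\ell}$. Since $H_0$ has distinct eigenvalues, every $\omega_{k\ell}\neq0$, so $A_0|_{S_\T}$ is invertible with $\det(A_0|_{S_\T})=\prod_{k<\ell}\omega_{k\ell}^2$ and $(A_0|_{S_\T})^{-1}$ is again block diagonal and antisymmetric. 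The matrix determinant lemma ($\det(A+\vec u\vec v^T)=\det(A)(1+\vec v^TA^{-1}\vec u)$) then gives
\[
 \det B=\det(A_0|_{S_\T})\,\bigl(1+\vec v^T(A_0|_{S_\T})^{-1}\vec u\bigr),
\]
so it remains to show the correction term $\vec v^T(A_0|_{S_\T})^{-1}\vec u$ does not equal $-1$; I expect it to vanish identically. The reason is that $\vec u$ and $\vec v$ are \emph{parallel in every root space}: by the same commutator computation, the $\T_{k\ell}$-component of $[-iH_1,\rho_0]$ is $(w_{\tau(k)}-w_{\tau(\ell)})(\alpha_{k\ell}\bar\lambda_{k\ell}+\beta_{k\ell}\lambda_{k\ell})$ and that of $[-iH_1,\rho_d]$ is $(w_k-w_\ell)(\alpha_{k\ell}\bar\lambda_{k\ell}+\beta_{k\ell}\lambda_{k\ell})$, so the blocks $\vec u^{(k\ell)}$ and $\vec v^{(k\ell)}$ are both real multiples of the common vector $(\alpha_{k\ell},\beta_{k\ell})$. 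Since $(A_0|_{S_\T})^{-1}$ is block diagonal, $\vec v^T(A_0|_{S_\T})^{-1}\vec u=\sum_{k<\ell}(\vec v^{(k\ell)})^T(A_0^{(k\ell)})^{-1}\vec u^{(k\ell)}$, and each summand is an antisymmetric bilinear form evaluated on two parallel vectors, hence zero. Therefore $\det B=\prod_{k<\ell}\omega_{k\ell}^2\neq0$ and all $n^2-n$ eigenvalues of $B$ are nonzero.

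The main obstacle is the bookkeeping in the last step: correctly expressing the root-space blocks of $A_0$, $\vec u$ and $\vec v$ in the (non-orthonormal) $\lambda$-basis and tracking the signs and normalizations, so that the ``parallel in each root space'' cancellation is manifest; the remaining ingredients---the rank-one structure, the matrix determinant lemma, and the vanishing of antisymmetric forms on parallel pairs---are routine. I note that this argument uses only $\omega_{k\ell}\neq0$, not the full strong-regularity or fully-connected hypotheses, consistent with the later claim that $D_f(\vec s_d)$ still has $n^2-n$ nonzero eigenvalues when $H_1$ fails to be fully connected.
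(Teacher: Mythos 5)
Your proof is correct and takes essentially the same route as the paper's: the well-definedness step via the Cartan/non-Cartan commutator structure is identical, and the eigenvalue count via the rank-one update of the block-diagonal $B_0$, the matrix determinant lemma, and the block-wise vanishing of $\vec{v}^T B_0^{-1}\vec{u}$ matches the paper, which carries out your ``antisymmetric form on parallel vectors'' cancellation as the explicit $2\times 2$ computation $\bigl(\Im(b_{k\ell}),\Re(b_{k\ell})\bigr)\begin{bmatrix} 0 & -1\\ 1 & 0\end{bmatrix}\begin{bmatrix}\Im(b_{k\ell})\\ \Re(b_{k\ell})\end{bmatrix}=0$ in each root space. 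The only differences are sign/bookkeeping conventions (your $\vec v=-A_1\vec s_d$ versus the paper's $B=B_0-\vec u\vec v^T$), and your closing observation that only $\omega_{k\ell}\neq 0$ is needed is likewise consistent with the paper's later use of this fact for non-ideal $H_1$.
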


\begin{IEEEproof}
Since we already know that $S_\C$ is in the kernel of
$D_f(\vec{s}_0)$, it suffices to show that the image of
$D_f(\vec{s}_0)$ is contained in $S_\T$, i.e.,
$D_f(\vec{s}_0)\vec{s}\in S_\T$.
\begin{align*}
  D_f(\vec{s}_0) \vec{s}
  &= A_0 \vec{s} + A_1 \vec{s}_0  \; \vec{s_d}^T A_1\vec{s} \\
  &= A_0 \vec{s} + (\vec{s_d}^T A_1 \vec{s} ) \, A_1 \vec{s}_0
\end{align*}
shows that it suffices to show that $A_0\vec{s} \in S_\T$ and
$A_1\vec{s}_0\in S_\T$.  Both relations follow from the fact that
the commutator of a Cartan element and a non-Cartan element of the
Lie algebra $\su(n)$ is always in the non-Cartan algebra $\T$, and
thus $[-iH_0,\rho] \in i\T$ since $-iH_0\in \C$, and
$[-iH_1,\rho_d]\in i\T$ since $i\rho_d \in \C$.  Therefore, the
restriction $B: S_\T \to S_\T$ of $D_f(\vec{s}_0) \vec{s}$ is well
defined.

Furthermore, the restriction of $A_0$ to $S_\T$ is a block-diagonal
matrix $B_0=\diag(A_0^{(k,\ell)})$ with
\begin{equation*}
  A_0^{(k,\ell)}= \omega_{k\ell}
  \begin{bmatrix} 0 & 1 \\ -1 & 0 \end{bmatrix}.
\end{equation*}
The restriction $\vec{u}$ of $A_1\vec{s}_0$ to $S_\T$ is a column
vector $(\vec{u}^{(1,2)};\vec{u}^{(1,3)};\ldots;\vec{u}^{(n-1,n)})$
of length $n(n-1)$ consisting of $n(n-1)/2$ elementary parts
\begin{equation}
 \label{eq:ukl}
 \vec{u}^{(k,\ell)} =
\frac{\Delta_{\tau(k)\tau(\ell)}}{\sqrt{2}}
 \begin{bmatrix}
 \Im(b_{k\ell}) \\ \Re(b_{k\ell})
\end{bmatrix}
\end{equation}
for $k=1,\ldots,n-1$ and $\ell=k+1,\ldots,n$.  Similarly, let
$\vec{v}$ be the restriction of $A_1\vec{s}_d$ to $S_\T$.  Then
$\vec{v}=(\vec{v}^{(1,2)};\ldots;\vec{v}^{(n-1,n)})$ with
$\vec{v}^{(k,\ell)}$ as in Eq.~(\ref{eq:ukl}) and $\tau$ the
identity permutation.

Thus the restriction of $D_f(\vec{s}_0)$ to the subspace $S_\T$ is
$B=B_0-\vec{u}\vec{v}^T$.  Since $\omega_{k\ell}\neq 0$ for all
$k,\ell$ by regularity of $H_0$, we have $\det(B_0) = \prod_{k,\ell}
\omega_{k\ell}^2 \neq 0$, i.e., $B_0$ invertible, and \( \det(B) =
\det(B_0 - \vec{u}\vec{v}^T) = (1-\vec{v}^T B_0^{-1} \vec{u})
\det(B_0) \) by the matrix determinant lemma~\cite{matrix}.
$B_0^{-1}$ is block-diagonal with blocks
\begin{equation*}
 C^{(k,\ell)} = [A_0^{(k,\ell)}]^{-1}
              = \frac{1}{\omega_{k\ell}}
 \begin{bmatrix} 0 & -1 \\ 1 & 0 \end{bmatrix}.
\end{equation*}
Thus $\vec{v}^T B_0^{-1}\vec{u}=\sum_{k,\ell} [\vec{v}^{(k,\ell)}]^T
C^{(k,\ell)} \vec{u}^{(k,\ell)}$ vanishes as
\begin{equation*}
  (\Im(b_{k\ell}),\Re(b_{k\ell})
 \begin{bmatrix} 0 & -1 \\ 1 & 0 \end{bmatrix}
 \begin{bmatrix}
 \Im(b_{k\ell}) \\ \Re(b_{k\ell})
\end{bmatrix} = 0, \quad \forall k, \ell.
\end{equation*}
Hence, $\det(B)=\det(B_0)\neq0$, and the restriction of
$D_f(\vec{s}_0)$ to $S_\T$ has only non-zero eigenvalues.
\end{IEEEproof}

\begin{lemma}
\label{lemma:generic:3} If $i\beta$ is a purely imaginary eigenvalue
of $B$ then it must be an eigenvalue of $B_0$, i.e., $i\beta=\pm
i\omega_{k\ell}$ for some $(k,\ell)$, and either the associated
eigenvector $\vec{e}$ must be an eigenvector of $B_0$ with the same
eigenvalue, or the restriction of $A_1\vec{s}_0$ to the $(k,\ell)$
subspace must vanish.
\end{lemma}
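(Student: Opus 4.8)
The plan is to work directly with the eigenvalue equation $B\vec{e} = i\beta\,\vec{e}$, where $B = B_0 - \vec{u}\vec{v}^T$ as established in the preceding lemma. Writing this out gives $B_0\vec{e} - \vec{u}(\vec{v}^T\vec{e}) = i\beta\,\vec{e}$, so $(B_0 - i\beta I)\vec{e} = (\vec{v}^T\vec{e})\,\vec{u}$. The argument splits on whether the scalar $c := \vec{v}^T\vec{e}$ vanishes. If $c = 0$, then $(B_0 - i\beta I)\vec{e} = 0$, so $\vec{e}$ is an eigenvector of $B_0$ with eigenvalue $i\beta$; since $B_0 = \diag(A_0^{(k,\ell)})$ with each block having eigenvalues $\pm i\omega_{k\ell}$, this forces $i\beta = \pm i\omega_{k\ell}$ for some $(k,\ell)$, which is the second alternative in the statement.

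If $c \neq 0$, I would first argue $i\beta$ must still be an eigenvalue of $B_0$. Suppose not; then $B_0 - i\beta I$ is invertible, so $\vec{e} = c\,(B_0 - i\beta I)^{-1}\vec{u}$. Substituting back into $c = \vec{v}^T\vec{e}$ gives $c = c\,\vec{v}^T(B_0 - i\beta I)^{-1}\vec{u}$, hence $\vec{v}^T(B_0 - i\beta I)^{-1}\vec{u} = 1$. I would compute this bilinear form block by block using the explicit $2\times 2$ form of $A_0^{(k,\ell)}$ together with the expressions~(\ref{eq:ukl}) for $\vec{u}^{(k,\ell)}$ and $\vec{v}^{(k,\ell)}$ — both of which are proportional to the \emph{same} vector $(\Im b_{k\ell},\Re b_{k\ell})^T$ up to the real scalars $\Delta_{\tau(k)\tau(\ell)}$ and $\Delta_{k\ell}$. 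The block inverse $(A_0^{(k,\ell)} - i\beta I_2)^{-1}$ is $\frac{1}{\omega_{k\ell}^2 - \beta^2}\begin{bmatrix} -i\beta & -\omega_{k\ell} \\ \omega_{k\ell} & -i\beta \end{bmatrix}$, and sandwiching it between two copies of $(\Im b_{k\ell},\Re b_{k\ell})^T$ kills the off-diagonal (antisymmetric) part just as in the previous lemma, leaving only the $-i\beta$ diagonal contribution, which is \emph{purely imaginary}. Summing over blocks, $\vec{v}^T(B_0 - i\beta I)^{-1}\vec{u}$ is purely imaginary and can therefore never equal $1$ (as long as $\beta \neq 0$, so no block is singular) — a contradiction. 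Hence $i\beta = \pm i\omega_{k\ell}$ for some $(k,\ell)$.

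It remains, still in the case $c\neq 0$, to extract the dichotomy. Knowing $i\beta = \pm i\omega_{k_0\ell_0}$ for some index pair, I would split the block equation $(A_0^{(k,\ell)} - i\beta I_2)\vec{e}^{(k,\ell)} = c\,\vec{u}^{(k,\ell)}$ into the blocks with $\omega_{k\ell} \neq \pm\beta$ (where $\vec{e}^{(k,\ell)} = c(A_0^{(k,\ell)} - i\beta I_2)^{-1}\vec{u}^{(k,\ell)}$ is determined) and the singular block(s). For each singular block the equation reads $(A_0^{(k_0\ell_0)} - i\beta I_2)\vec{e}^{(k_0\ell_0)} = c\,\vec{u}^{(k_0\ell_0)}$; since the left side lies in the range of the singular matrix $A_0^{(k_0\ell_0)} - i\beta I_2$, which is the span of $(1,\mp i)^T$, consistency requires $\vec{u}^{(k_0\ell_0)}$ to lie in that one-dimensional range, but $\vec{u}^{(k_0\ell_0)} \propto (\Im b_{k_0\ell_0}, \Re b_{k_0\ell_0})^T$ is a \emph{real} vector, so this forces $\vec{u}^{(k_0\ell_0)} = \vec{0}$ — the first alternative — unless $c = 0$ on that block contribution, which loops back to $\vec{e}$ being an eigenvector of $B_0$. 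I expect the main obstacle to be bookkeeping the case where $i\beta$ coincides with $\pm i\omega_{k\ell}$ for \emph{several} index pairs simultaneously (possible since $H_0$ is only assumed strongly regular, which controls the $\omega_{k\ell}$ but not their signs/values across all pairs); one must check that the reality-of-$\vec{u}^{(k_0\ell_0)}$ argument applies to each singular block independently, and that the non-singular blocks contribute only a purely imaginary amount to $\vec{v}^T\vec{e}$ so the contradiction in the $i\beta \notin \spec(B_0)$ step is unaffected. This is a finite case check and should go through cleanly.
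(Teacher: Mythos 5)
Your proposal is correct, and its skeleton coincides with the paper's: treat $B$ as the rank-one update $B_0-\vec{u}\vec{v}^T$ of the block-diagonal $B_0$, and rule out $i\beta\notin\spec(B_0)$ by showing the scalar $\vec{v}^T(B_0-i\beta I)^{-1}\vec{u}$, computed block by block, equals $-\tfrac{i\beta}{2}\sum_{k,\ell}\Delta_{k\ell}\Delta_{\tau(k)\tau(\ell)}|b_{k\ell}|^2/(\omega_{k\ell}^2-\beta^2)$, a purely imaginary number that cannot equal $1$. The differences are in execution and they work in your favor. First, you obtain the identity $\vec{v}^T(B_0-i\beta I)^{-1}\vec{u}=1$ directly from the eigenvector relation $(B_0-i\beta I)\vec{e}=(\vec{v}^T\vec{e})\,\vec{u}$ rather than from the matrix determinant lemma applied to $\det(B-i\beta I)$; these are equivalent, yours being marginally more elementary. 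Second, and more substantively, you extract the dichotomy by splitting once and for all on the complex scalar $c=\vec{v}^T\vec{e}$: if $c=0$ then $\vec{e}$ is a $B_0$-eigenvector, and if $c\neq 0$ the singular-block equation $(A_0^{(k,\ell)}-i\beta I_2)\vec{e}^{(k,\ell)}=c\,\vec{u}^{(k,\ell)}$ forces the real vector $\vec{u}^{(k,\ell)}$ into the one-dimensional complex range spanned by $(1,\mp i)^T$, hence $\vec{u}^{(k,\ell)}=0$. The paper instead writes $\vec{e}=\vec{x}+i\vec{y}$, manipulates the two coupled real equations (multiplying one by $-\omega_{12}B_0^{-1}$ and using $-\omega_{12}^2[B_0^{(1,2)}]^{-1}=B_0^{(1,2)}$ to cancel on the singular block), and then cases on $\vec{v}^T\vec{x}$ and $\vec{v}^T\vec{y}$ — morally the same split, since $c=0$ iff $\vec{v}^T\vec{x}=\vec{v}^T\vec{y}=0$, but your range/solvability argument is shorter and avoids the real–imaginary bookkeeping. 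Two trivia: your worry about several singular blocks is moot, since strong regularity of $H_0$ (with all $\omega_{k\ell}$ of one sign under the paper's ordering) leaves at most one block with $|\omega_{k\ell}|=|\beta|$, though your per-block treatment handles it anyway; and your phrase ``unless $c=0$ on that block contribution'' is a harmless slip, as $c$ is a global scalar already assumed nonzero in that branch.
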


\begin{IEEEproof}
If $i\gamma$ is not an eigenvalue of $B_0$ then $(B_0-i\beta I)$ is
invertible and by the matrix determinant lemma
\begin{align*}
  0 &= \det(B_0 - \vec{u}\vec{v}^T - i\beta I) \\
    &= \det( (B_0-i\beta I) - \vec{u}\vec{v}^T) \\
    &= (1-\vec{v}^T (B_0-i\beta I)^{-1} \vec{u}) \det(B_0-i\beta I).
\end{align*}
Since $\det(B_0-i\beta I)\neq 0$ we must therefore have
\begin{equation*}
  \vec{v}^T (B_0-i\beta I)^{-1} \vec{u} = 1.
\end{equation*}
Noting $(B_0-i\beta I)^{-1}$ is block-diagonal with blocks
\begin{equation}
 \label{eq:B0inv}
  C_\beta^{(k,\ell)} =
  \frac{1}{\omega_{k\ell}^2-\beta^2}
  \begin{bmatrix}
   -i\beta & -\omega_{k\ell} \\ \omega_{k\ell} & -i\beta
  \end{bmatrix},
\end{equation}
\begin{equation*}
  \Big(\Im(b_{k\ell}),\Re(b_{k\ell}\Big)
 \begin{bmatrix} -i\beta & -\omega_{k\ell} \\ \omega_{k\ell} & -i\beta
\end{bmatrix}
 \begin{bmatrix}
 \Im(b_{k\ell}) \\ \Re(b_{k\ell})
\end{bmatrix} = -i\beta |b_{k\ell}|^2
\end{equation*}
for all $k,\ell$, this leads to
\begin{align*}
 1& =\vec{v}^T (B_0-i\beta I)^{-1}\vec{u}
    =\sum_{k,\ell}[\vec{v}^{(k,\ell)}]^T C_\beta^{(k,\ell)}\vec{u}^{(k,\ell)}
\\
  & = \frac{-i\beta}{2} \sum_{k,\ell}
    \frac{\Delta_{k\ell}\Delta_{\tau(k)\tau(\ell)}}{\omega_{k\ell}^2-\beta^2}
    |b_{k\ell}|^2.
\end{align*}
Since all terms in the sum are real this is a contradiction.  Thus
if $i\beta$ is a purely imaginary eigenvalue of $B$ then it must be
an eigenvalue of $B_0$.

Since the spectrum of $B_0$ is $\{\pm i\omega_{k\ell}\}$, this means
$i\beta =\pm i\omega_{k\ell}$ for some $(k,\ell)$.  Without loss of
generality assume $\gamma=\omega_{12}>0$ and let
$\vec{e}=\vec{x}+i\vec{y}$ be the associated eigenvector of $B$.
Then
\begin{align}
\label{eqn:Bx0}
  B\vec{e} = (B_0-\vec{u}\vec{v}^T)(\vec{x}+i\vec{y})
           = i\omega_{12}(\vec{x}+i\vec{y}),
\end{align}
which is equivalent to
\begin{subequations}
\label{eq:Bx}
\begin{align}
 (B_0-\vec{u}\vec{v}^T)\vec{x}&= -\omega_{12} \vec{y}\\
 (B_0-\vec{u}\vec{v}^T)\vec{y}&=  \omega_{12} \vec{x}.
\end{align}
\end{subequations}
Multiplying~(\ref{eq:Bx}b) by $-\omega_{12} B_0^{-1}$ and adding it
to (\ref{eq:Bx}a)
\begin{align*}
 \underline{B_0\vec{x}}-\vec{u}\vec{v}^T\vec{x}
 +\omega_{12}B_0^{-1}\vec{u}\vec{v}^T\vec{y}
 &= \underline{-\omega_{12}^2 B_0^{-1} \vec{x}}
\end{align*}
Eq.~(\ref{eq:B0inv}) shows that
$-\omega_{12}^2[B_0^{(1,2)}]^{-1}=B_0^{(1,2)}$, i.e., on the
$\T_{12}$ subspace the underlined terms above cancel, and thus the
first two rows of the above system of equations are
\begin{align*}
 \begin{bmatrix} u_1 \\ u_2 \end{bmatrix}
 (\vec{v}^T\vec{x})
= \begin{bmatrix} 0 & -1\\ 1 & 0 \end{bmatrix}
 \begin{bmatrix} u_1 \\ u_2 \end{bmatrix}
 (\vec{v}^T\vec{y}).
\end{align*}
If $\vec{v}^T\vec{x}\neq 0$ then the last equation gives
$u_1=-c^2u_1$ and $u_2=-c^2u_2$ with
$c=\vec{v}^T\vec{y}/\vec{v}^T\vec{x}$, which can only be satisfied
if $u_1=u_2=0$.  Similarly if $\vec{v}^T\vec{y}\neq0$. If
$\vec{v}^T\vec{x}=\vec{v}^T\vec{y}=0$ then we have
$B\vec{e}=B_0\vec{e}=i\omega_{12}\vec{e}$, implying that $\vec{e}$
is an eigenvector of $B_0$ associated with $i\omega_{12}$.
\end{IEEEproof}

\subsection{Proof of Lemma~\ref{lemma:non-sta}}
\label{app:proof:nonstat:invar}

\begin{lemma}
Given a generic $\rho_d(t)$, the LaSalle invariant set $E$ is
irregular if and only if $\rank\tilde{A}(\vec{s}_d(0))<n^2-n$.
\end{lemma}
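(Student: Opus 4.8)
The plan is to reduce the statement to linear algebra on $\rho_d(0)$ plus one short submersion argument, so that the only nontrivial content is a single differential‑topological observation. First I would translate ``$E$ irregular'' into a condition at $t=0$: since the Hamiltonian is ideal throughout this section, Theorem~\ref{thm:lasalle:4} says $(\rho_1,\rho_2)\in E$ iff $[\rho_1,\rho_2]$ is diagonal in the $H_0$‑eigenbasis, and because $f\equiv 0$ on $E$ every trajectory in $E$ is conjugate to its initial value by the diagonal unitary $e^{-iH_0t}$, which preserves diagonality of the commutator; hence $E$ is irregular precisely when there is a $\rho_1\in\M$ with $[\rho_1,\rho_d(0)]$ diagonal in the $H_0$‑basis and nonzero. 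I would then set up the relevant subspaces of the real Bloch space $\RR^{n^2-1}$: $C_0=\ker A(\vec{s}_d(0))$, which for generic $\rho_d$ is the $(n-1)$‑dimensional traceless commutant of $\rho_d(0)$; $W_0=\ker\tilde{A}(\vec{s}_d(0))\supseteq C_0$, the space of traceless $\rho_1$ whose commutator with $\rho_d(0)$ lies in $S_\C$; and $\mathcal{R}=\operatorname{ran}A(\vec{s}_d(0))$ (the image of $A(\vec{s}_d(0))$), the $(n^2-n)$‑dimensional space which also equals the tangent space $T_{\rho_d(0)}\M$, i.e.\ the off‑diagonal directions in the $\rho_d(0)$‑eigenbasis. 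Note $\rank\tilde{A}(\vec{s}_d(0))=n^2-n$ iff $\dim W_0=n-1$ iff $W_0=C_0$, and that $[\rho_1,\rho_d(0)]$ always lies in $\mathcal{R}$.

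Next I would record the linear‑algebraic heart: the linear map $\rho_1\mapsto[\rho_d(0),\rho_1]$ has kernel $C_0$, so it carries $W_0$ onto $\{[\rho_d(0),\rho_1]:\rho_1\in W_0\}=\mathcal{R}\cap S_\C$ with kernel exactly $C_0$; therefore $\dim(\mathcal{R}\cap S_\C)=\dim W_0-(n-1)$, and $W_0=C_0\iff\mathcal{R}\cap S_\C=\{0\}$. Combined with the previous paragraph this already gives one implication: if $\rank\tilde{A}(\vec{s}_d(0))=n^2-n$ then $\mathcal{R}\cap S_\C=\{0\}$, so no $\rho_1\in\M$ has $[\rho_1,\rho_d(0)]$ both diagonal and nonzero, i.e.\ $E$ is regular; equivalently, irregularity of $E$ forces $\rank\tilde{A}(\vec{s}_d(0))<n^2-n$.

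For the converse I would use the commutator map itself. Assume $\rank\tilde{A}(\vec{s}_d(0))<n^2-n$, so $\mathcal{R}\cap S_\C$ contains some $Z\ne 0$. Consider $c:\M\to\mathcal{R}$, $c(\rho_1)=[\rho_1,\rho_d(0)]$, with $c(\rho_d(0))=0$. Since $c$ is the restriction of a linear map, its differential at $\rho_d(0)$ is $v\mapsto[v,\rho_d(0)]$ on $T_{\rho_d(0)}\M$; this is injective because $T_{\rho_d(0)}\M$ (the off‑diagonal directions in the $\rho_d(0)$‑eigenbasis) meets the commutant $C_0$ only in $0$, and hence, by equality of dimensions, an isomorphism onto $\mathcal{R}$. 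Thus $c$ is a local diffeomorphism near $\rho_d(0)$, its image contains a neighborhood of $0$ in $\mathcal{R}$, and so $\epsilon Z\in c(\M)$ for all small $\epsilon>0$; this produces $\rho_1\in\M$ with $[\rho_1,\rho_d(0)]=\epsilon Z\in(\mathcal{R}\cap S_\C)\setminus\{0\}$, making $E$ irregular. I expect this converse to be the main obstacle: the tempting but painful route is to choose local coordinates on $\M$ and fight the nonlinear isospectral constraint, whereas the clean route is to notice that the commutator map is a submersion — indeed a local diffeomorphism onto $T_{\rho_d(0)}\M$ — at $\rho_d(0)$, which makes the required $\rho_1$ appear for free. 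The rest is routine bookkeeping: that the first $n^2-n$ rows of $A$ can be taken to span $S_\T$ (from the ordered basis in Appendix~\ref{app:lie_algebra_basis}), and the translation between the real Bloch representation and the operator/Lie‑algebra picture.
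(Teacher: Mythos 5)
Your proof is correct, and its skeleton matches the paper's: the same reduction to $t=0$ by conjugating trajectories in $E$ with $e^{-iH_0t}$ (using Theorem~\ref{thm:lasalle:4}), followed by linear algebra comparing $\ker\tilde{A}(\vec{s}_d(0))$ with $\ker A(\vec{s}_d(0))$. For the direction ``irregular $\Rightarrow$ rank deficient'' your formulation via $W_0=C_0\iff\mathcal{R}\cap S_\C=\{0\}$ is equivalent to the paper's argument that, once $\rank A(\vec{s}_d(0))=n^2-n$ is known, the last $n-1$ rows of $A$ are combinations of those of $\tilde{A}$; you also shortcut the paper's circulant-matrix computation by identifying $\ker A(\vec{s}_d(0))$ directly with the $(n-1)$-dimensional traceless commutant of the generic state, which is legitimate. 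The genuine difference is the converse. The paper disposes of it with ``similarly, we can prove\ldots'', and the obvious ``similar'' linear-algebra step only yields a traceless Hermitian matrix $X$ with $[X,\rho_d(0)]$ diagonal and nonzero --- not a point of $\M$, since $\rho_d(0)+\epsilon X$ need not be isospectral with $\rho_d(0)$. Your inverse-function-theorem argument closes exactly this gap: because $c(\rho_1)=[\rho_1,\rho_d(0)]$ restricted to $\M$ has differential $v\mapsto[v,\rho_d(0)]$, injective on $T_{\rho_d(0)}\M$ and hence an isomorphism onto $\mathcal{R}$, the image of $c$ covers a neighborhood of $0$ in $\mathcal{R}$ and small multiples of any nonzero $Z\in\mathcal{R}\cap S_\C$ are realized by genuine isospectral states, so $(\rho_1,\rho_d(0))\in E$ with nonzero commutator. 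So your route buys a complete proof of the ``if'' direction where the paper only sketches one, at the cost of one differential-topological step; the paper's route is more elementary and feeds directly into the explicit determinant criterion of Theorem~\ref{thm:nonstat:invar}. One shared caveat, inherited from the paper rather than introduced by you: both arguments implicitly restrict $E$ to pairs whose second component lies on the target trajectory (``there exists $t_0$''), glossing over limit points in the closure of a non-periodic orbit; this is not a new gap in your write-up.
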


\begin{IEEEproof}
It suffices to show that if $\rank \tilde{A}(\vec{s}_d)=n^2-n$, then
for any $\rho$ such that $[\rho,\rho_d(0)]$ diagonal, we have
$[\rho,\rho_d(0)]=0$.  If this is true then for any
$(\rho_1,\rho_2)\in E$ with diagonal commutator, there exists some
$t_0$ such that $\rho_2=e^{iH_0t_0}\rho_d(0)e^{-iH_0t_0}$ and since
$[\rho_1,\rho_2]$ is diagonal, $[e^{-iH_0t_0}\rho_1
e^{iH_0t_0},\rho_d(0)]$ is also diagonal, hence equal to zero and
$[\rho_1,\rho_2]=0$.

Let $\rho_2=\rho_d(0)$.  First we show that the kernel of
$A(\vec{s}_2)$ has dimension $n-1$ and thus $\rank A(\vec{s}_2) \le
n^2-n$.  In this case $\rank \tilde{A}(\vec{s}_d) = n^2-n = \rank
A(\vec{s}_2)$ implies that the remaining $n-1$ rows of
$A(\vec{s}_2)$ are linear combinations of the rows of
$\tilde{A}(\vec{s}_2)$ and therefore
$\tilde{A}(\vec{s}_2)\vec{s}_1=\vec{0}$ implies
$A(\vec{s}_2)\vec{s}_1=\vec{0}$, or $[\rho_1,\rho_2]=0$.

In order to show that the kernel of $A(\vec{s}_2)$ has dimension
$n-1$, we recall that if $\rho_2=U \diag(w_1,\ldots,w_n) U^\dagger$
for some $U\in\SU(n)$ then $[\rho_1,\rho_2]=0$ for all
$\rho_1=U\diag(w_{\tau(1)},\ldots,w_{\tau(n)}) U^\dagger$, where
$\tau$ is a permutation of $\{1,\ldots,n\}$.  If the $w_k\ge 0$ are
distinct then these $\rho_1$'s span at least a subspace of dimension
$n$ since the determinant of the circulant matrix
\begin{equation*}
  C = \begin{bmatrix}
       w_1 & w_2 & \ldots & w_{n-1} & w_n \\
       w_2 & w_3 & \ldots & w_n & w_1 \\
       \vdots & \vdots & \ddots & \vdots & \vdots \\
       w_{n-1} & w_n & \ldots & w_{n-3} & w_{n-2} \\
       w_n & w_1 & \ldots & w_{n-2} & w_{n-1}
      \end{bmatrix}
\end{equation*}
is non-zero, and hence its columns are linearly independent and span
the $n$-dimensional subspace of diagonal matrices.  If the $w_k$ are
distinct then the kernel cannot have dimension greater than $n-1$
since the $\rho_1$ can only span a subspace isomorphic to the set of
diagonal matrices. Thus, the kernel of $A(\vec{s}_2)$ has dimension
$n-1$.  (The dimension is reduced by one since we drop the
projection of $\rho$ onto the identity in the Bloch representation.)
Similarly, we can prove if $\rank \tilde{A}(\vec{s}_d(0))<n^2-n$,
then $E$ contains points with nonzero commutator.
\end{IEEEproof}

\begin{IEEEbiography}{Xiaoting Wang}
is a PhD student in the Dept of Applied Mathematics and Theoretical
Physics at the University of Cambridge.  He received a Certificate of
Advanced Study in Mathematics from the University of Cambridge in 2006,
and a BSc degree in mathematics from Wuhan University, China, in 2005.
His research interests include the theory of quantum control and its
applications in quantum information science.
\end{IEEEbiography}

\begin{IEEEbiography}{S.~G.~Schirmer}
is an Advanced Research Fellow of the UK Engineering and Physical
Sciences Research Council (EPSRC) and has held positions as Marie Curie
Senior Research Fellow, Research Fellow of the Cambridge-MIT Institute,
and Coordinator of the Quantum Technologies Group.  Her research
interests include nano-science at the quantum edge and quantum
engineering, especially modeling, control and characterization of
quantum systems and devices.
\end{IEEEbiography}

\end{document}